\newtheorem{theorem}{Theorem}
\newtheorem{definition}{Definition}
\newtheorem{example}{Example}
\title{Meanings and Applications of Structure in \\Networks of Dynamic Systems}
\author{Vasu Chetty and Sean Warnick\\Information and Decision Algorithms Laboratories\\Department of Computer Science\\Brigham Young University}
\date{}
\begin{document}
\maketitle

\section{Introduction}

Dynamics and structure are two of the most important properties of a system.  A system's dynamics describe its {\em behavior}, that is, how it constrains allowed combinations of manifest variables and defines what is possible and impossible \cite{willems}.  So, for example, a deterministic signal-processing system restricts the allowed output trajectory that corresponds to a given input trajectory; this allowed combination is possible, while other combinations of different output trajectories with this input trajectory are impossible.   A considerable body of literature over the last century has addressed the representation, analysis, and design of a system's dynamics, and a rich theory is now well established for doing so, especially for systems in feedback \cite{lst,feedbacksystems,paganini}.   

This focus on feedback highlights the power of systems theory to describe problems involving the interconnection of systems, and thus to describe and address questions of {\em structure}.  Indeed, results demonstrating how a feedback structure can systematically compensate for model uncertainty are among the most important system-design concepts available.  Understanding structure can also help describe information constraints in cyber-physical systems, that is, constraints characterizing what information is available to which parts of a system at various times.  Such constraints are not only useful for characterizing distributed systems, but they also play an important role in describing uncertainty about a system, including the uncertainty arising from possibilities of cyber attacks and other security problems.  Furthermore, understanding the specific structure of a particular system can play an important role in reverse engineering how the system realizes its behavior--that is, how it actually achieves the observed behavior.

Systems theory offers a powerful language for interconnecting systems into composite systems, and therefore it naturally describes system structure as the interconnection of component subsystems.  Nevertheless, its varied descriptions of systems lead to varied notions of system structure, some of which are more meaningful in certain domains than others.  For example, two representations of a controlled, causal, linear time-invariant system include its impulse-response matrix (or, equivalently, its transfer function matrix) and a state space realization.  These different representations of the same system lead to different perspectives on the structure of the system.  On the one hand, the sparsity pattern of the impulse-response matrix indicates the existence of {\em paths} from each input to each output within the system, while the sparsity pattern of the state matrices reveal details about how the system {\em captures}, {\em stores}, and {\em retrieves} information.  The critical lessons are that {\em structure is a property of a mathematical model of a system, not of the system itself}, and {\em any given system will have multiple structures available for consideration and analysis}.    

This chapter reviews four notions of system structure, three of which are contextual and classic (i.e. the complete computational structure linked to a state space model, the sparsity pattern of a transfer function, and the interconnection of subsystems) and one which is relatively new (i.e. the signal structure of a system's dynamical structure function).  Although each of these structural concepts apply to the nonlinear and stochastic setting, this work will focus on linear time invariant systems to distill the key concepts and make their relationships clear.  We then discusses three applications of the newest structural form (the  signal structure of a system's dynamical structure function): network reconstruction, vulnerability analysis, and a recent result in distributed control that guarantees the synthesis of a stabilizing controller with a specified structure or proves that no such controller exists.

\subsection{What is System Structure?}
Many physical systems have a natural notion of structure; loosely speaking, it's the way things are put together.  Certainly buildings, bridges, and other edifices are excellent examples of the idea; who doesn't immediately recognize the characteristic structure of the Eiffel Tower, the Parthenon, or the Golden Gate Bridge?  These structures maintain their distinct appearances because of the ways their respective components have been assembled; the idea of structure, in these cases, is intimately connected to the interconnection pattern of components.

Certainly this notion of structure, the interconnection pattern of components, is equally meaningful for a number of other engineered systems.  Four wheels or sweptback wings are familiar structural choices for many land or air vehicles.  Clocks, watches, locks, musical instruments, incandescent light bulbs, LEDs, televisions, radios, satellites, circuit boards, server farms, and even the internet all have meaningful notions of structure characterized by the interconnection of components or subsystems.  Each of these systems results in a characteristic physical appearance and a corresponding interconnection pattern, at various scales, that we may come to think of as its ``structure." 

But what about chemical processes, ocean currents, or even the weather?  It doesn't seem to make sense to talk about the way the ``weather" is put together.  Does the {\em fluidic} aspect of these kinds of systems remove them from meaningful interpretations of structure?  Certainly the resulting amorphous quality of these systems does make it difficult to conceptualize compartmentalized subsystems and think about their interconnection.  Nevertheless, another notion of structure, based on how manifest variables affect each other, can be very meaningful for such systems.  

We call this newer notion of structure a system's {\em signal structure}, since it describes how manifest signals affect each other.  It is defined for all systems, even those with obvious component subsystems, and sometimes it coincides with the interconnection pattern of subsystems. Nevertheless, frequently--even when a system exhibits a meaningful pattern of interconnected subsystems, and especially when no such interconnection pattern exists--the signal structure offers a unique perspective on a system's internal interactions and lends insight about the system not available otherwise.  Consider, for example, the market demand for the offering of goods available at a particular store.  Prices of the goods each day may be viewed as inputs to the system, and corresponding daily sales can be seen as outputs.  Although it is difficult to imagine compartmentalized subsystems interconnected to produce the resulting market demand, it is, on the other hand, quite natural to consider how interrelatedness of products shapes the resulting sales dynamic, leading to particular dependencies among observed sales.  Signal structure captures such dependencies and can reveal elegant structure in complex systems.  

Information diffuses and flows, much like a fluid, so cyber-physical systems benefit from both subsystem and signal structural views.  In this chapter we review standard notions of system structure, including the interconnection of subsystems, and we present a detailed treatment of signal structure and its corresponding mathematical representation, the dynamical structure function.  

\subsection{Why Does System Structure Matter?}

There are many situations where a particular structure of a system directly impacts its dynamic behavior, and thus the need, for example, of a specific shape in a ship's hull, or a design of a complex freeway interchange, is well understood.  In these cases we simply choose a structure that yields a system with the desired behavior.    

Nevertheless, what about situations where very different structural choices yield exactly the same behavior, such as is frequently the case with software, electronics, and a variety of other systems?  In these cases, is one structural choice preferred over others?  What criteria should one use to evaluate different structural options when the system dynamics are otherwise equivalent?  Consider the following:
\begin{itemize}
\item {\bf Implementation Cost}.  The fact that a given transfer function has many state realizations, some of which may be much more sparse than others, illustrates the important idea that the same manifest behavior of a system can often be realized from implementations with significantly different numbers of internal components.  In situations where the number of components is proportional to the cost of the implementation, as is the case for many physical systems, implementation cost then becomes an important reason for understanding the structural choices available to realize a specific dynamic design.     
\item {\bf Understandability}.  Internal structure of a system can be important to help one understand (or hinder an outsider from understanding) how the system works.  Hierarchy and modularity of subsystems are examples of methods for organizing designs so that complex systems can be more easily understood.  This understandability can have a major impact on other aspects of system management, such as making the system easier to: 
\begin{itemize}
\item visualize,
\item promote situational awareness,
\item verify,
\item diagnose for component failure, 
\item facilitate targeted access to system components, and 
\item maintain.  
\end{itemize}
On the other hand, making structural choices that reduce the understandability of a system can help to secure the system from various types of infiltration, including:
\begin{itemize}
\item espionage, or 
\item sabotage.
\end{itemize} 
\item {\bf Learning From Data}.  Since different mathematical models can describe the same system, and some of these models correspond to more detailed notions of structure than others, models with coarser structural descriptions are easier to learn from data than others.  Choosing a mathematical representation of a system consistent with the information available to identify it from data is critical to accurately infer its network structure; input-output data can identify the sparsity structure of a system's transfer function, but more information about the system must be known, a priori, to identify more detailed models.  Sometimes identifying structurally richer models of a system is called {\em reverse engineering}, {\em network inference} or {\em network reconstruction}.  More about network reconstruction is discussed in Section \ref{sec:netrecon}. 
    \item {\bf Attack Modeling}.   Different mathematical models of a system exhibit different notions of system structure, and these various structures expose different parts of the system as ``links" or ``nodes" in an appropriately defined structural graph.  Under the assumption that the likelihood of attacks involving multiple links in the system is inversely proportional to the number of participating links (e.g. that single-link attacks or failures are more likely than coordinated, multi-link strikes), a particular structural representation of the system also characterizes a class of anticipated perturbations with associated risks.  Analyzing the robustness of system properties such as stability, controllability, and observability with respect to perturbations in this class leads to a method for systematically characterizing system vulnerability.  More on this topic of vulnerability analysis is discussed in Section \ref{sec:vulnerability}.  
\item {\bf Constraint Modeling}.  As the size and complexity of engineered systems grow, the need to make judicious choices about how to move information from one part of the system to another becomes increasingly important, since communication costs or delays may have a significant impact on system performance.  These choices suggest the need for structural analyses of the system, and various notions of structure can effectively model different types of information constraints of the system. More on this topic, with respect to the design of distributed stabilizing controllers, is discussed in Section \ref{sec:distcontdes}.    
\end{itemize}

Although there are many purposes for developing a rich theory of system structure, this handful of reasons describe much of the motivation behind the work presented in this chapter.  The next section reviews classical notions of system structure as a context for introducing dynamical structure functions and the signal structure.  It's followed by a discussion of three motivating applications of signal structure: network reconstruction, vulnerability analysis, and the design of stabilizing, distributed controllers.    

\section{Mathematical Representations of Systems and Structures}

This section describes four different mathematical representations of systems and their structures: the state space model with its complete computational structure, the transfer function and the input-output sparsity structure, structured linear fractional transformations and the subsystem structure, and the dynamical structure function with its signal structure.  Each of these system representations completely characterize the dynamic behavior of the system.  Nevertheless, they retain varying degrees of structural information. 

In this work, a ``structure" is a directed graph.  We will see that different system representations specify different structural graphs, and each structural graph carries with it a unique interpretation, or meaning.  We will restrict our attention to finite-dimensional, causal, deterministic linear time invariant (LTI) systems defined over continuous time, but the concepts extend naturally to the nonlinear and stochastic settings with different types of independent variable.



\subsection{State Space Models and the Complete Computational Structure}

{\em State space models} are the most structurally informative system representation considered here.  The standard state space model is given by:
\begin{equation}
\begin{array}{cll} \dot{x} & = & Ax + Bu \\ y & = & Cx + Du, \end{array} \label{eq:ssequation}
\end{equation}
where $x(t)\in\mathbb{R}^n$ represents the states within the system defined over $t\in\mathbb{R}$; $\dot{x}(t)\in\mathbb{R}^n$ represents the time derivative of these state variables;  $u(t)\in\mathbb{R}^m$ are controlled inputs into the system; and $y(t)\in\mathbb{R}^p$ are measured outputs. This representation is sufficiently detailed to completely characterize both the transfer function and the dynamical structure function of a system, with their corresponding structures. 

Nevertheless, this standard state space model does not differentiate between systems with different subsystem structures.  For example, consider two systems in feedback.  One can easily compute the closed-loop dynamics of such an interconnection and represent them with a single standard state space model.  Nevertheless, if presented with this closed-loop model, one can not determine what the two subsystems were that generate it.  This failure to distinguish different subsystem structures comes from the standard state space model's lack of representation power to distinguish between the composition of functions (see Example \ref{exmpl:feedback}).

To distinguish different subsystem structures, we need to differentiate between equivalent computations such as 1) $f(x) = x$, 2) $f(x) = 2(0.5x)$ and 3) $f(x) = 0.3x + 0.7x$.  We accomplish this by introducing {\em auxiliary variables}, $w$ that represent intermediate stages of computation.  In this way we can differentiate 1) $f(x)=x$ from 2) $f(x) = 2w$ and $w=0.5x$ or 3) $f(x) = w_1 + w_2$ and $w_1 = 0.3x$ and $w_2=0.7x$, since each of these different ways of computing the same functional relationship involve zero, one, or two auxiliary variables, respectively.  The auxiliary variables that are specified, say, in a system's ``blueprint" or manifest directly to observers, help us distinguish the system's actual computational structure from others we could imagine. 

Introducing auxiliary variables into the standard state space model characterizes a differential-algebraic system of equations capable of characterizing all three of the other system representations discussed here.  We call this modified system of equations the {\em generalized state space model} of a system, and represent it as
\begin{equation} 
\begin{array}{rcl}\dot{x} & = & Ax + \hat{A}w + Bu \\ w & = & \bar{A}x+\tilde{A}w+ \bar{B}u \\ y & = & Cx + \bar{C}w+Du \end{array} \label{eq:genss} 
\end{equation}
where $w\in\mathbb{R}^{l}$, $\hat{A}\in\mathbb{R}^{n\times l}$, $\bar{A}\in\mathbb{R}^{l\times n}$, $\tilde{A}\in\mathbb{R}^{l\times l}$, $\bar{B}\in\mathbb{R}^{l\times m}$, and $\bar{C}\in\mathbb{R}^{p\times l}$.  
The number of auxiliary variables, $l$, is called the {\em intricacy} of the generalized state space model.  Choosing $\tilde{A}$ so that $I-\tilde{A}$ is invertible yields a differentiability index of zero.  This ensures that the auxiliary variables can always be algebraically eliminated from the system, producing a dynamically equivalent standard state space model (\ref{eq:ssequation}).  We call this equivalent standard state space model the {\em zero-intricacy} realization or representation of a given generalized state space model (\ref{eq:genss}).

\begin{example}
\label{exmpl:feedback}
Consider the feedback interconnection of two systems, given by
\[
\begin{array}{rclccrcl}
\dot{x}_1&=&A_1x_1+B_1r_1&\;\;\;\;\;\;\;\;\;\;\;\;\;\;\;\;&&\dot{x}_2&=&A_2x_2+B_2r_2\\
y_1&=&C_1x_1&&&y_2&=&C_2x_2
\end{array}
\] 
with $r_1 = u_1 + y_2$ and $r_2=u_2+y_1$, where $u_1$ and $u_2$ are exogenous inputs to the closed-loop system, and $y_1$ and $y_2$ are measured outputs from the closed-loop system.  Defining $w_1=y_1$ and $w_2=y_2$, we obtain the following generalized state space model of the feedback interconnection:
\begin{equation}
\begin{array}{rcrcrcr}
\left[\begin{array}{c}\dot{x}_1\\\dot{x}_2\end{array}\right]&=&\left[\begin{array}{cc}A_1&0\\0&A_2\end{array}\right]\left[\begin{array}{c}x_1\\x_2\end{array}\right]&+&\left[\begin{array}{cc}0&B_1\\B_2&0\end{array}\right]\left[\begin{array}{c}w_1\\w_2\end{array}\right]&+&\left[\begin{array}{cc}B_1&0\\0&B_2\end{array}\right]\left[\begin{array}{c}u_1\\u_2\end{array}\right]\\\\
\left[\begin{array}{c}w_1\\w_2\end{array}\right] &=& \left[\begin{array}{cc}C_1&0\\0&C_2\end{array}\right]\left[\begin{array}{c}x_1\\x_2\end{array}\right]&+&\left[\begin{array}{cc}0&0\\0&0\end{array}\right]\left[\begin{array}{c}w_1\\w_2\end{array}\right]&+&\left[\begin{array}{cc}0&0\\0&0\end{array}\right]\left[\begin{array}{c}u_1\\u_2\end{array}\right]\\\\
\left[\begin{array}{c}y_1\\y_2\end{array}\right]&=&\left[\begin{array}{cc}0&0\\0&0\end{array}\right]\left[\begin{array}{c}x_1\\x_2\end{array}\right]&+&\left[\begin{array}{cc}1&0\\0&1\end{array}\right]\left[\begin{array}{c}w_1\\w_2\end{array}\right]&+&\left[\begin{array}{cc}0&0\\0&0\end{array}\right]\left[\begin{array}{c}u_1\\u_2\end{array}\right]
\end{array}
\label{eq:generalform}
\end{equation}
Note that $I-\tilde{A}$ is invertible, thus enabling us to easily eliminate $w$ from the equations.  Doing so yields the zero-intricacy representation of the feedback interconnection:
\begin{equation}
\begin{array}{rcl}
\left[\begin{array}{c}\dot{x}_1\\\dot{x}_2\end{array}\right]&=&\left[\begin{array}{cc}A_1&B_1C_2\\B_2C_1&A_2\end{array}\right]\left[\begin{array}{c}x_1\\x_2\end{array}\right]+\left[\begin{array}{cc}B_1&0\\0&B_2\end{array}\right]\left[\begin{array}{c}u_1\\u_2\end{array}\right]\\
\left[\begin{array}{c}y_1\\y_2\end{array}\right]&=&\left[\begin{array}{cc}C_1&0\\0&C_2\end{array}\right]\left[\begin{array}{c}x_1\\x_2\end{array}\right]
\end{array}
\label{eq:zerointricacy}
\end{equation}
Although these representations are dynamically equivalent, meaning that (\ref{eq:generalform}) and (\ref{eq:zerointricacy}) generate identical state and output trajectories if they are given the same initial condition $x_o$ and input trajectory $u(t)$, (\ref{eq:generalform}) encodes information to uniquely specify the original subsystems and their feedback interconnection structure, while (\ref{eq:zerointricacy}) does not.  
\end{example}

Example \ref{exmpl:feedback} illustrates a generalized state space model and the corresponding zero-intricacy realization of a system composed of the interconnection of multiple subsystems.  In fact, whenever $I-\tilde{A}$ is invertible, every generalized state space model has a unique, well-defined zero-intricacy realization.  Likewise, every zero-intricacy state space model is dynamically equivalent to a rich variety of generalized state space models of any positive intricacy; these generalized state space models differ only in how their computations are performed, or in their underlying computational structure.  We call this structure of the most refined generalized state space description of a system, even zero-intricacy ones, the {\em complete computational structure}, and all other notions of system structure discussed in this work can be derived directly from it.    

\begin{definition}[Complete Computational Structure]
Given a generalized state space model, as in (\ref{eq:genss}), its {\em complete computational structure} is a weighted directed graph, $\mathscr{C}$ with vertex set $V(\mathscr{C})$ and edge set $E(\mathscr{C})$ given by:
\begin{itemize}
\item $V(\mathscr{C}) = \{u_1, ..., u_m, x_1, ...,x_n, w_1, ..., w_l, y_1, ..., y_p\}$, and
\item $E(\mathscr{C})$ is specified by the nonzero entries of the adjacency matrix $\mathscr{A}(\mathscr{C})$, where
\begin{equation}
\mathscr{A}(\mathscr{C}) = \left[\begin{array}{cccc} 0 & 0 & 0 & 0\\B & A & \hat{A} & 0\\ \bar{B} & \bar{A} & \tilde{A} & 0\\ D &  C & \bar{C} & 0\end{array}\right]^T.
\label{eq:ad}
\end{equation}
That is to say, a potential edge from $v_i\in V(\mathscr{C})$ to $v_j\in V(\mathscr{C})$ has weight $\mathscr{A}(\mathscr{C})_{ij}$, but we only recognize the existence of edges with non-zero weight.
\end{itemize}
\end{definition}

The generalized state space model (\ref{eq:genss}) encodes information about how the system performs the computations necessary to realize its dynamic behavior.  It is like an {\em information blueprint} of how specific components are interconnected to access information from input signals; how this information is represented (in a specific coordinate system) and combined with other data retrieved from memory; how these new calculations are stored; and how all of this data combines to produce measurable output signals. The meaning, then, of the complete computational structure characterized by (\ref{eq:ad}), is the {\em information architecture} of a very specific computation system: how information is represented, transformed, and flows through the system.  Note that there is a distinction between ``physical structure" and state space models; in some cases, the particular basis specified by a state space model is more detailed than the physical structure may suggest.  For example, consider an inertial mass.  This mass behaves like a second order system according to Newton's Second Law of Motion, but it is not clear whether states of the system are necessarily position and velocity, or whether they are some linear combinations of position and velocity.  Exactly how some systems represent and store information may be unclear, but if it were known, state space models are capable of representing this refined level of structural knowledge. These models (the generalized state space model and its associated complete computational structure) then become the most refined knowledge of our system, ground truth from which all other representations can be compared.    

Note that because intricacy variables can always be eliminated from a generalized state description without changing its dynamics, the most refined generalized state space model, with intricacy $l>0$, immediately defines a particular sequence of state space models indexed by their intricacies, $l-1,...,0$.  Each of these coarser models has a structure associated with it that we call a {\em computational structure}, but we reserve the descriptor, {\em complete computational structure} for the most refined structural specification of the system; once the complete computational structure is specified, even if it has zero intricacy, all other hypothetical refinements are considered fictitious while any agglomerative structure derived from it is a valid notion of structure for the system.   

\begin{example}
Making Example 1 concrete, consider the following two systems:
\[
\begin{array}{rclcrcl}
\begin{bmatrix}\dot{x}_1\\\dot{x}_2\end{bmatrix}&=&\begin{bmatrix}-1&2\\0&-2\end{bmatrix}\begin{bmatrix}x_1\\x_2\end{bmatrix}+\begin{bmatrix}2&-1\\-1&1\end{bmatrix}\begin{bmatrix}r_1\\r_2\end{bmatrix}&&\begin{bmatrix}\dot{x}_3\\\dot{x}_4\\\dot{x}_5\end{bmatrix}&=&\begin{bmatrix}-5&-4&2\\3&2&-1\\0&0&-3\end{bmatrix}\begin{bmatrix}x_3\\x_4\\x_5\end{bmatrix}+\begin{bmatrix}0&-1\\0&1\\1&0\end{bmatrix}\begin{bmatrix}r_3\\r_4\end{bmatrix}\\
\begin{bmatrix}y_1\\y_2\end{bmatrix}&=&\begin{bmatrix}1&2\\1&1\end{bmatrix}\begin{bmatrix}x_1\\x_2\end{bmatrix},&&\begin{bmatrix}y_3\\y_4\end{bmatrix}&=&\begin{bmatrix}1&2&0\\1&1&0\end{bmatrix}\begin{bmatrix}x_3\\x_4\\x_5\end{bmatrix},
\end{array}
\] 
interconnected in feedback, so that
\[
\begin{array}{rclcrcl}
\begin{bmatrix}r_1\\r_2\end{bmatrix}&=&\begin{bmatrix}y_3\\y_4\end{bmatrix}+\begin{bmatrix}u_1\\u_2\end{bmatrix},&\;\;\;\;\;\;\;\;\;\;&
\begin{bmatrix}r_3\\r_4\end{bmatrix}&=&\begin{bmatrix}y_1\\y_2\end{bmatrix}+\begin{bmatrix}u_3\\u_4\end{bmatrix},
\end{array}
\]
leading to the following generalized state space model:
\begin{equation}\small
\begin{array}{rcrcrcr}\begin{bmatrix}\dot{x}_1 \\ \dot{x}_2 \\ \dot{x}_3 \\ \dot{x}_4 \\ \dot{x}_5 \end{bmatrix}& = &\begin{bmatrix} -1 & 2 & 0 & 0 & 0 \\ 0 & -2 & 0 & 0 & 0 \\ 0 & 0 & -5 & -4 & 2 \\ 0 & 0 & 3 & 2 & -1\\ 0 & 0 & 0 & 0 & -3 \end{bmatrix}\begin{bmatrix}x_1 \\ x_2 \\ x_3 \\ x_4 \\ x_5 \end{bmatrix}& + &\begin{bmatrix}0 & 0 & 2 & -1\\0 & 0 & -1 & 1\\0 & -1 & 0 & 0\\0 & 1 & 0 & 0\\1 & 0 & 0 & 0 \end{bmatrix}\begin{bmatrix}w_1 \\ w_2 \\ w_3 \\ w_4\end{bmatrix}& + &\begin{bmatrix}2 & -1 & 0 & 0\\-1 & 1 & 0 & 0\\0 & 0 & 0 & -1\\0 & 0 & 0 & 1\\0 & 0 & 1 & 0 \end{bmatrix}\begin{bmatrix}u_1 \\ u_2 \\ u_3 \\ u_4\end{bmatrix} \\

\begin{bmatrix}w_1 \\ w_2 \\ w_3 \\ w_4 \end{bmatrix}& = &\begin{bmatrix} 1 & 2 & 0 & 0 & 0 \\ 1 & 1 & 0 & 0 & 0 \\ 0 & 0 & 1 & 2 & 0 \\ 0 & 0 & 1 & 1 & 0 \end{bmatrix}\begin{bmatrix}x_1 \\ x_2 \\ x_3 \\ x_4 \\ x_5 \end{bmatrix}& + &\begin{bmatrix}0 & 0 & 0 & 0\\0 & 0 & 0 & 0\\0 & 0 & 0 & 0\\0 & 0 & 0 & 0\end{bmatrix}\begin{bmatrix}w_1 \\ w_2 \\ w_3 \\ w_4\end{bmatrix}& + &\begin{bmatrix}0 & 0 & 0 & 0\\0 & 0 & 0 & 0\\0 & 0 & 0 & 0\\0 & 0 & 0 & 0\end{bmatrix}\begin{bmatrix}u_1 \\ u_2 \\ u_3 \\ u_4\end{bmatrix} \\

\begin{bmatrix}y_1 \\ y_2 \\ y_3 \\ y_4 \end{bmatrix}& = &\begin{bmatrix}0 & 0 & 0 & 0 & 0 \\ 0 & 0 & 0 & 0 & 0 \\ 0 & 0 & 0 & 0 & 0 \\ 0 & 0 & 0 & 0 & 0 \end{bmatrix}\begin{bmatrix}x_1 \\ x_2 \\ x_3 \\ x_4 \\ x_5 \end{bmatrix}& + &\begin{bmatrix}1 & 0 & 0 & 0\\0 & 1 & 0 & 0\\0 & 0 & 1 & 0\\0 & 0 & 0 & 1\end{bmatrix}\begin{bmatrix}w_1 \\ w_2 \\ w_3 \\ w_4\end{bmatrix}& + &\begin{bmatrix}0 & 0 & 0 & 0\\0 & 0 & 0 & 0\\0 & 0 & 0 & 0\\0 & 0 & 0 & 0\end{bmatrix}\begin{bmatrix}u_1 \\ u_2 \\ u_3 \\ u_4\end{bmatrix}.  \label{eq:exgssm} 
\end{array}
\end{equation}
The complete computational structure of this system, given in (\ref{eq:exgssm}), is shown in Figure \ref{fig:genccs}.
\begin{figure}[h!] \centering \includegraphics[page=1,width=.7\textwidth]{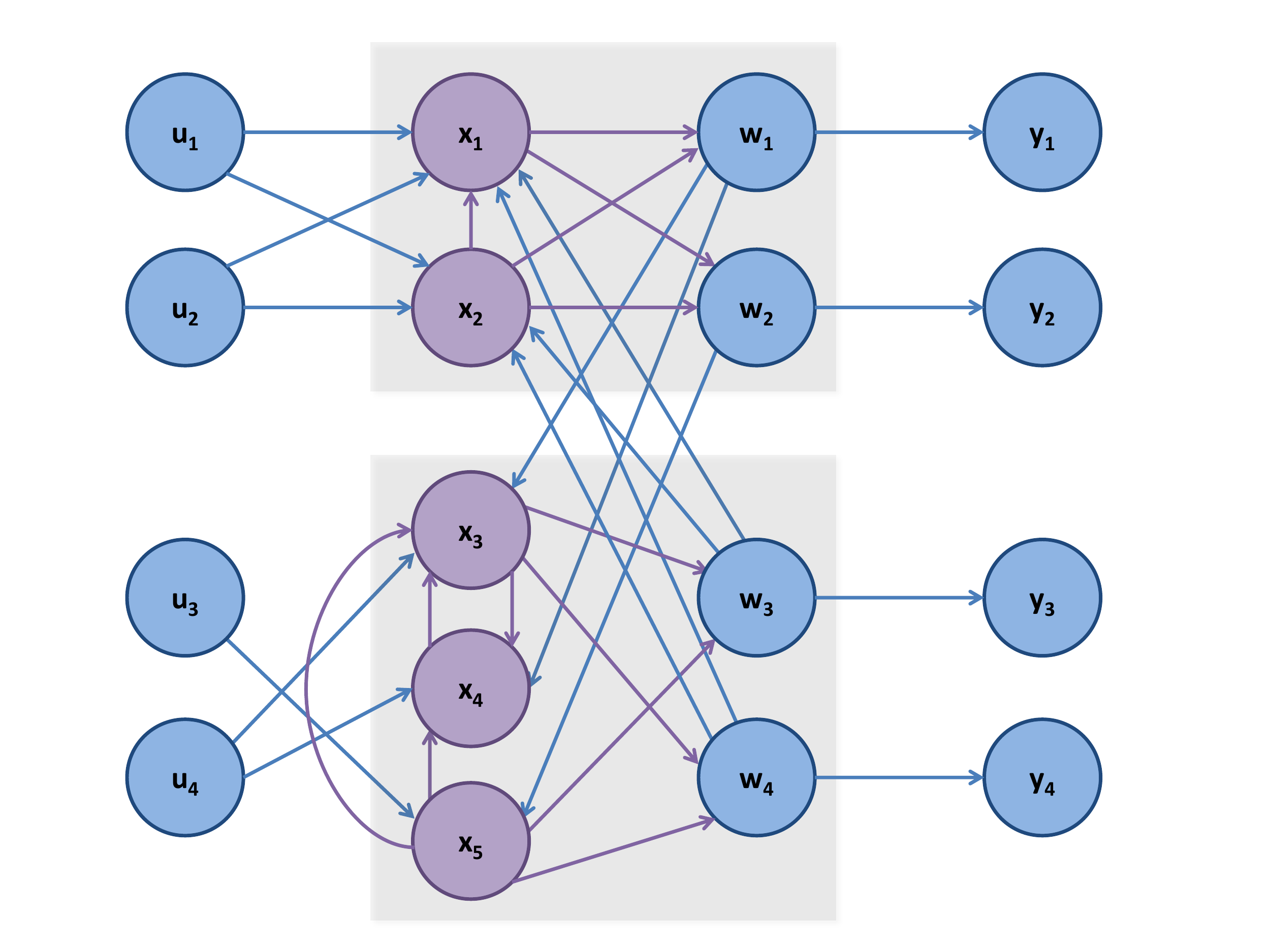} \caption{Complete computational structure of the generalized state space model from (\ref{eq:exgssm}).  Blue nodes are manifest variables, while purple nodes indicate hidden variables.  Notice that the original feedback structure of subsystems, reflected by gray boxes, is preserved, since the only interaction between subsystems is through manifest variables.} \label{fig:genccs} \end{figure}
The zero-intricacy realization of this generalized state space model, (\ref{eq:exgssm}), is then given by:
\begin{equation} 
\begin{array}{rcrcr}
\begin{bmatrix}\dot{x}_1 \\ \dot{x}_2 \\ \dot{x}_3 \\ \dot{x}_4 \\ \dot{x}_5 \end{bmatrix} & = & \begin{bmatrix}-1 & 2 & 1 & 3 & 0 \\ 0 & -2 & 0 & -1 & 0 \\ -1 & -1 & -5 & -4 & 2 \\ 1 & 1 & 3 & 2 & -1 \\ 1 & 2 & 0 & 0 & -3\end{bmatrix}\begin{bmatrix}x_1 \\ x_2 \\ x_3 \\ x_4 \\ x_5\end{bmatrix} & + & \begin{bmatrix}2 & -1 & 0 & 0\\-1 & 1 & 0 & 0\\0 & 0 & 0 & -1\\0 & 0 & 0 & 1\\0 & 0 & 1 & 0 \end{bmatrix}\begin{bmatrix}u_1 \\ u_2 \\ u_3 \\ u_4 \end{bmatrix} \\ 
\begin{bmatrix}y_1 \\ y_2 \\ y_3 \\ y_4 \end{bmatrix} & = & \begin{bmatrix}1 & 2 & 0 & 0 & 0 \\ 1 & 1 & 0 & 0 & 0  \\ 0 & 0 & 1 & 2 & 0 \\ 0 & 0 & 1 & 1 & 0 \end{bmatrix}\begin{bmatrix}x_1 \\ x_2 \\ x_3 \\ x_4 \\ x_5 \\ x_6\end{bmatrix} & + & \begin{bmatrix}0 & 0 & 0 & 0\\0 & 0 & 0 & 0\\0 & 0 & 0 & 0\\0 & 0 & 0 & 0\end{bmatrix}\begin{bmatrix}u_1 \\ u_2 \\ u_3 \\ u_4\end{bmatrix} \end{array} \label{eq:exzissm}
\end{equation}
The computational structure of the zero intricacy realization, given in (\ref{eq:exzissm}), is shown in Figure \ref{fig:ziccs}.  Notice the differences with the complete computational structure shown in Figure \ref{fig:genccs}.  For example, the complete computational structure has nodes for auxiliary variables, $w$, while the computational structure of the zero intricacy realization does not.  Also, original subsystem structure is preserved in the complete computational structure, highlighted by the background gray boxes, while it is lost in the computational structure of the zero intricacy realization, resulting in no distinguishable subsystems. 
\begin{figure}[h!] \centering \includegraphics[page=8,width=.7\textwidth]{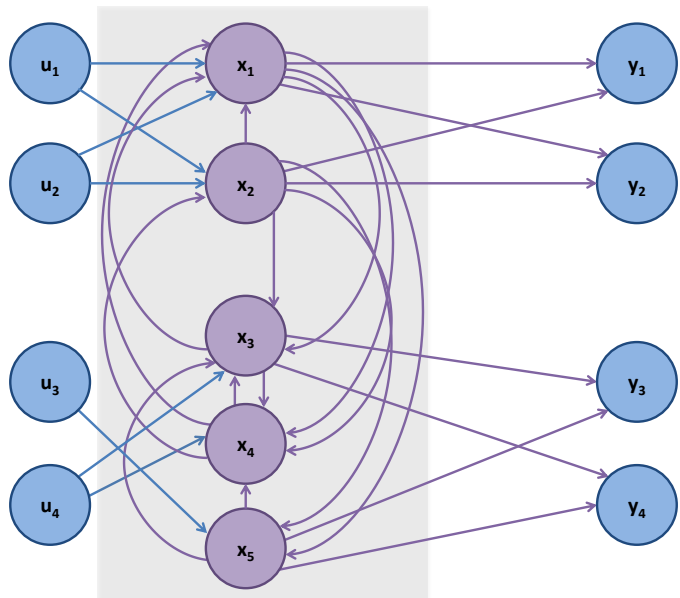} \caption{Computational structure of the zero intricacy realization (\ref{eq:exzissm}) of the generalized state space model in (\ref{eq:exgssm}).  Like Figure \ref{fig:genccs}, blue nodes indicate manifest variables while purple nodes are hidden variables.  Notice that the original subsystem structure is lost, and only a single subsystem remains visible from manifest variables.} \label{fig:ziccs} \end{figure}
\end{example}

\subsection{Functional System Descriptions and the Manifest Structure}
\label{sec:tf}

While state space models are the most structurally informative system representations, functional system descriptions, such as convolution models or transfer functions, are at the other end of the spectrum.  These ``black box" representations of a system are capable of describing the same dynamic behavior as their state space counterparts, yet they do not model the detailed interactions among system components the way state space representations do\footnote{Although transfer functions and convolution models of LTI systems assume zero initial conditions, the impact of a non-zero initial condition is easily modeled with the addition of an appropriately designed external disturbance.}.

This inability to convey detailed structural information is not necessarily a weakness, however.  For example, functional representations need fewer parameters to characterize a given dynamic behavior, making them easier to learn from data (called {\em system identification} \cite{sysid1, sysid2}) than their state space counterparts .  Moreover, their parsimonious description of a system's dynamics creates an important distinction between a system's behavior and how it realizes that behavior, enabling a concerted focus on the design of a system's dynamics without worrying about implementation.  

Just as high-level programming languages abstract many of the details of the computer they run on, functional system descriptions are high-level abstractions of state space models.  In particular, the specific processes a state realization uses to decide which information is stored in which parts of the state vector correspond to memory management activities that are completely invisible to a functional description of a system.  This distinction is further exemplified by noting that state space models are imperative descriptions of a system, encoding computations in terms of the time evolution of the system state, while functional descriptions are inherently declarative, specifying what the system does without prescribing how it should do it.   

The result of this high-level/low-level relationship between functional system descriptions and state space models is a one-to-many relationship between the two model classes.  That is, every state space model has a zero-intricacy realization as in Equation (\ref{eq:ssequation}) that identifies a unique functional system description, whether it be the impulse response matrix of a convolution model or a transfer function matrix, given by:
\begin{equation}
\begin{array}{lcl}
y(t) = h(t)*u(t)&\;\;\;\;\;&Y(s)=H(s)U(s)\\
h(t) =Ce^{At}B+D\delta(t),&\;\;\;\;\;&H(s) = C(SI-A)^{-1}B+D,
\end{array}
\label{eq:fsd}
\end{equation}
where $*$ denotes convolution, $\delta(t)$ is the Dirac delta function, $h(t)$ is the system's $p \times m$ impulse response matrix, $Y(s)$ and $U(s)$ are the Laplace transforms of $y(t)$ and $u(t)$, and $H(s)$ is the system's $p \times m$ transfer function matrix--which is also the Laplace transform of $h(t)$.  

Note, however, that there are many state space models that specify the same impulse response or transfer function; each of these state space models specifies a different implementation (or realization) of the same dynamic behavior.  Among all these state realizations of a given functional description of a system, some have fewer states than others.  In fact, systems with functional descriptions that can be described by finite-dimensional LTI state space models\footnote{Although all LTI state space models have transfer functions, not all transfer functions have state space realizations.  This is because the imperative nature of state space models demand that they are {\em causal}, meaning that future values of manifest variables only depend on past and present values of manifest variables.  Transfer functions that are {\em proper} rational functions of the Laplace variable correspond to causal finite dimensional LTI systems; we do not concern ourselves with other kinds in this work.} have a unique integer, $n$, associated with them called the {\em Smith-McMillan degree}.  This degree is the minimal number of states necessary for any state space realization of the system.  Nevertheless, even restricting attention to state space models with order equal to the Smith-McMillan degree does not yield a unique state realization; given a minimal realization $(A,B,C,D)$ of a transfer function $H(s)$, any $n \times n$ transformation, $T$, yields another minimal realization $(\hat{A},\hat{B},\hat{C},\hat{D})$ given by:
\begin{equation}
\begin{array}{cccc}
\hat{A} = TAT^{-1}&\hat{B}=TB&\hat{C}=CT^{-1}&\hat{D}=D
\end{array}
\label{eq:trans}
\end{equation}
such that $C(sI-A)^{-1}B+D = H(s) = \hat{C}(sI-\hat{A})^{-1}\hat{B}+\hat{D}$.  Thus, even among minimal realizations, there are infinitely many implementations of a given dynamic behavior specified by a functional description such as $H(s)$, and these implementations differ only in their structural properties.

The functional description of a system, however, retains only the structural properties that are common among all of its state realizations, which is precisely the mathematical structure of the functional description itself.  This structure describes the internal closed-loop relationships among manifest variables, and therefore is called the {\em manifest structure}.

\begin{definition}[Manifest Structure]
Given a generalized state space model, as in (\ref{eq:genss}), identified by a functional system description, as in (\ref{eq:fsd}), its {\em manifest structure} is a weighted directed graph $\mathscr{M}$ with vertex set $V(\mathscr M)$ and edge set $E(\mathscr M)$ given by:
\begin{itemize}
\item $V(\mathscr{M}) = \{u_1, ..., u_m, y_1, ..., y_p\}$, each representing a manifest signal of the system, and
\item $E(\mathscr{M})$ has an edge from $u_i$ to $y_j$, labeled by either $H_{ji}$ or $h_{ji}$, provided they are non-zero. 
\end{itemize}
\end{definition}
\noindent Note that when a system's manifest variables partition naturally into inputs and outputs, then its manifest structure is a bipartite graph, with directed edges from inputs to outputs. 

An alternative definition of the manifest structure characterizes $\mathscr M$ directly from $\mathscr C$ using only graphical properties (which is useful when extending these results to the nonlinear setting).  In that case, we say $\mathscr M$ has an edge from $u_i$ to $y_j$ if the net impact of all paths in $\mathscr C$ from $u_i$ to $y_j$ is non-zero, or, equivalently, if every equivalent realization of the system, specified by a transformation $T$ as in (\ref{eq:trans}), with complete computational structure $\mathscr{C}_T$, has a path from $u_i$ to $y_j$.

\begin{example}
Consider the zero-intricacy state space model in (\ref{eq:exzissm}) from Example 2.  The corresponding transfer function is given by:

\[H(s) = C(sI-A)^{-1}B+D =\] \begin{equation}\begin{bmatrix} 0 & \frac{s^2+5s+6}{s^3+6s^2+11s+5} & \frac{1}{s^3+6s^2+11s+5} & 0 \\ \frac{s+1}{s^2+3s+1} & \frac{s^3+6s^2+11s+8}{s^5+9s^4+30s^3+44s^2+26s+5} & \frac{3s+3}{s^5+9s^4+30s^3+44s^2+26s+5} & \frac{1}{s^2+3s+1} \\ \frac{1}{s^2+3s+1} & \frac{s^2+7s+10}{s^5+9s^4+30s^3+44s^2+26s+5} & \frac{2s^2+6s+5}{s^5+9s^4+30s^3+44s^2+26s+5} & \frac{s+2}{s^2+3s+1} \\ 0 & \frac{1}{s^3+6s^2+11s+5} & \frac{s+1}{s^3+6s^2+11s+5} & 0 \end{bmatrix} \label{eq:tfexample}\end{equation}
The manifest structure corresponding to this transfer function, (\ref{eq:tfexample}), that represents the internal closed-loop pathways from inputs to outputs of the system in (\ref{fig:ziccs}) is given in Figure \ref{fig:tfmanifeststructure}.

\begin{figure}[h!]
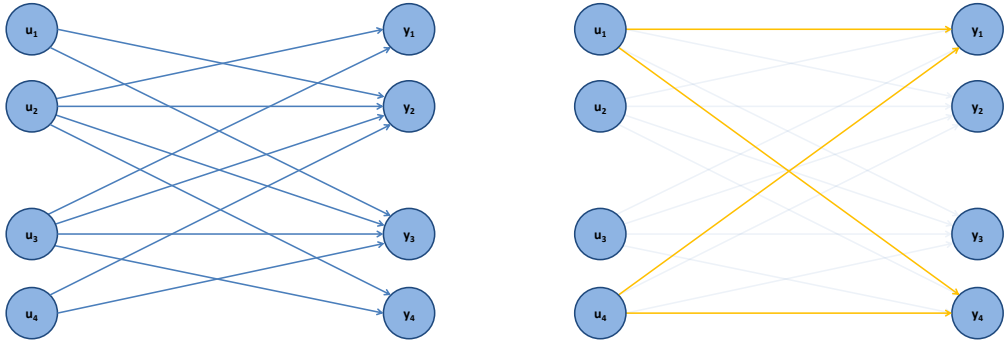

        \centering
        \begin{subfigure}[b]{0.45\textwidth}
                \includegraphics[page=2,width=\textwidth]{examples}
                \caption{Manifest structure of the system with transfer function (\ref{eq:tfexample})}
                \label{fig:tfs}
        \end{subfigure}
        \begin{subfigure}[b]{0.45\textwidth}
                \includegraphics[page=3,width=\textwidth]{examples}
                \caption{Missing edges in the manifest structure, corresponding to zero elements in $H$.}
                \label{fig:tfsmisspath}
        \end{subfigure}
        \caption{Manifest structure of the same system from Figures \ref{fig:genccs} and \ref{fig:ziccs}. Notice the lack of edges from $u_1$ to $y_1$ and $y_4$, and from $u_4$ to $y_1$ and $y_4$, corresponding to associated zeros in $H(s)$.  These missing links are highlighted in Figure \ref{fig:tfsmisspath}.  Note that these links are missing in the manifest structure even though paths exist in Figure \ref{fig:ziccs} from every input to every output.}\label{fig:tfmanifeststructure}
\end{figure}

Note that in some cases, although a pathway exists from an input to an output in the system's complete computational structure, it is possible that the corresponding transfer function from the input to the output is zero.  For example, notice that although paths exist from every input to every output in the computational structure of the zero intricacy realization generating $H$ (Figure \ref{fig:ziccs}), $H_{11}$, $H_{41}$, $H_{14}$, and $H_{44}$ are nevertheless all zero. Thus, we see that the existence of paths from $u_i$ to $y_j$ is not sufficient for $H_{ij}$ to be nonzero; the closed-loop, net effect of {\em all} paths from $u_i$ to $y_j$ must be nonzero for $H_{ij}$ to be nonzero; exact cancellations, which can be common in software and other engineered systems, can generate zeros in the functional description.


\end{example}

\subsection{Structured Linear Fractional Transformations and the Subsystem Structure}
\label{sec:substr}

Having identified the complete computational structure as the most informative structural representation, and the manifest structure as the least, we now explore the most common intermediate structural representation: the interconnection of subsystems.  Subsystem structure is less informative than the complete computational structure because it does not reveal the internal structure of subsystems.  On the other hand, subsystem structure can be more informative than manifest structure because it reveals the interconnection pattern among subsystems.

To isolate and represent the interconnection pattern of subsystems for a given system, begin by considering a set of $q$ subsystems, $S=\{\begin{array}{cccc}S_1&S_2&...&S_q\end{array}\}$, interconnected into a composite system, $H$.  It is conceivable that each of these subsystems are themselves divisible into constituent subsystems, or that not all of the $q$ subsystems are discernible from $H$'s manifest variables, so we specify the level of modeling abstraction by: 
\begin{enumerate}
\item Modeling each of the $q$ constituent subsystems with a suitable functional description, such as a proper or strictly proper transfer function $S_i(s)$, $i=1,2,...,q$, or a {\em single-subsystem} state space realization, characterized as a generalized state space model with subsystem structure consisting of a single subsystem, so that no further division of the subsystems is possible, and
\item Ensuring that each of the subsystem's outputs, $w_i$, is a measured output of the composite system $H$, so $y=[\begin{array}{cccc}w_1^T&w_2^T&...&w_q^T\end{array}]^T$, where $y$ is the output of $H$.
\end{enumerate}
Note that each subsystem is {\em distinct}, meaning that state variables internal to one subsystem are different from those of the other subsystems, yielding no mechanism for interaction except through their respective manifest variables. Let $u$ be a vector of external inputs; $v_i$ and $w_i$ be the vectors of inputs and outputs for system $S_i$; and $v$ and $w$ be the stacked inputs and outputs from all systems, $v=[\begin{array}{cccc}v_1^T&v_2^T&...&v_q^T\end{array}]^T$ and $w=[\begin{array}{cccc}w_1^T&w_2^T&...&w_q^T\end{array}]^T$, so that $w = y$.  Interconnecting these systems then means defining binary matrices $L$ and $K$ such that:
\begin{equation}
\left[\begin{array}{cc}L&K\end{array}\right]\left[\begin{array}{c}u\\w\end{array}\right]=v.
\end{equation}

Our convention is that the process of interconnection only allows the selection of particular signals and possibly adding them together, thus restricting the interconnection matrices, $L$ and $K$, to have elements with values of either zero or one; all other computations are part of the systems in $S$.  Further, we assume that the resulting interconnection is {\em well-posed}, meaning that all signals within $H$ are uniquely specified for any value of external inputs and underlying state variables \cite{doyle}.  This assumption ensures that the proposed interconnection is physically sensible and not merely a mathematical artifact. 


The composite system, $H$, is then clearly defined by the structured linear fractional transformation (LFT) as in Figure \ref{fig:lft}, given by:
\begin{equation}
\begin{array}{rcl}
N\left[\begin{array}{c}u\\w\end{array}\right]&=&\left[\begin{array}{c}y\\v\end{array}\right],\\
w &=& Sv,
\end{array}
\label{eq:lft}
\end{equation}
where
\begin{equation}
\begin{array}{lcr}
N = \left[\begin{array}{cc}0&I\\L&K\end{array}\right],&\;\;\;\;\;\;\;\;\;&
S = \left[\begin{array}{cccc}S_1&0&...&0\\0&S_2&&0\\\vdots&&\ddots&\vdots\\0&&...&S_q\end{array}\right]
\end{array}
\label{eq:S}
\end{equation}
and $S_i$ can be represented by either a suitable functional description, such as a proper or strictly proper transfer function matrix or the associated impulse response matrix of a convolution model, or by any single-subsystem generalized state realization.  The symbol $S$ is overloaded, representing both the set of subsystems and the decoupled operator of subsystem models in (\ref{eq:S}), but the appropriate meaning should always be clear from context. Equations (\ref{eq:lft}) and (\ref{eq:S}) characterize $H$ as a structured LFT in terms of $S$.  Combining these equations yields, for example, $Y(s) = [S(s)(I-KS(s))^{-1}L]U(s)$, implying that $H(s) = S(s)(I-KS(s))^{-1}L$, where $Y(s)$ and $U(s)$ are the Laplace transforms of $y(t)$ and $u(t)$, respectively; similarly, a well-specified expression can be obtained for $h$ directly in the time domain.  The functional description of the composite system, $H$, in either the time or frequency domain, is completely specified by the structured LFT description in (\ref{eq:lft}) and (\ref{eq:S}).

\begin{figure}[h!] \centering \includegraphics[width=0.6\textwidth]{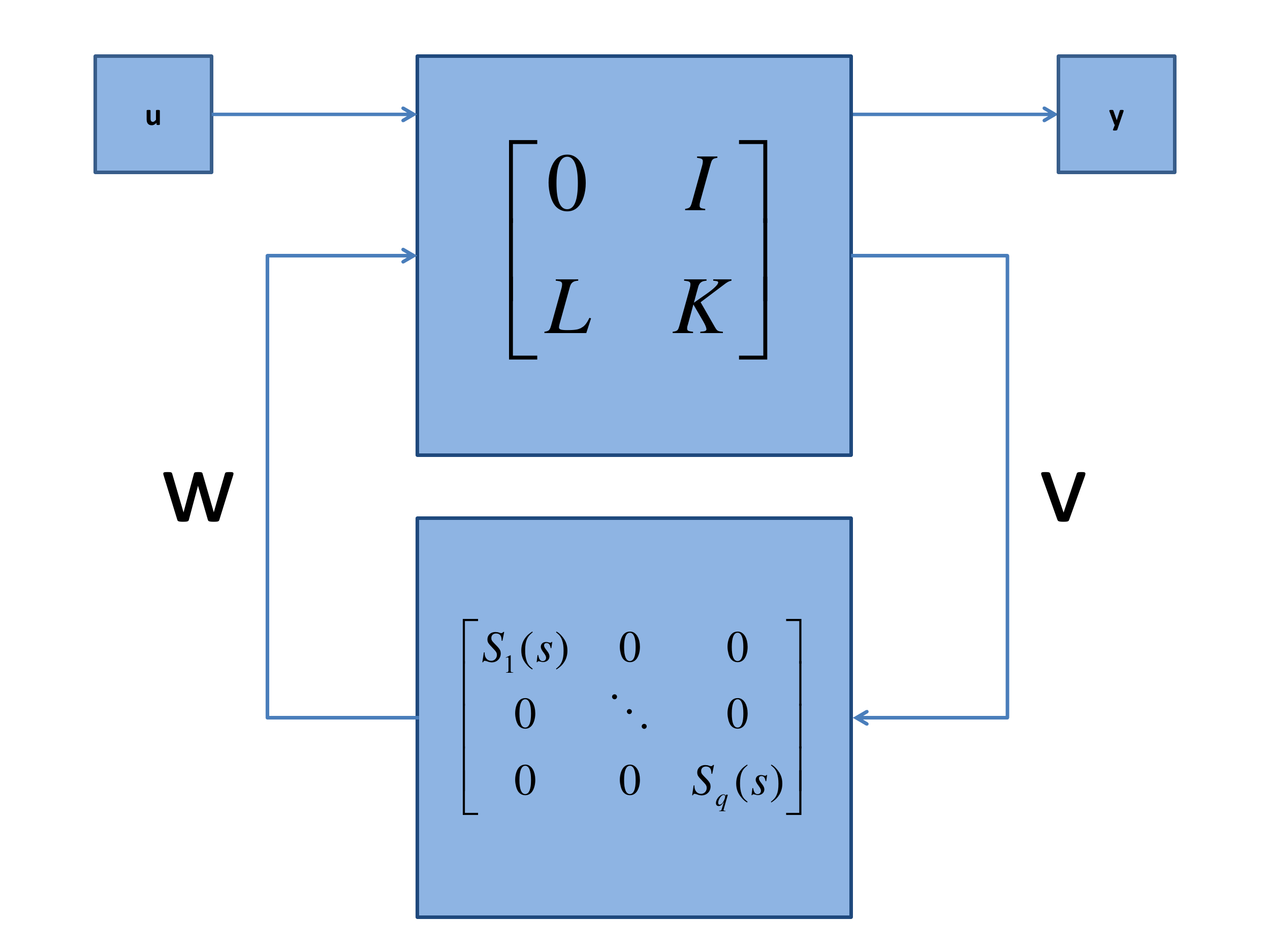} \caption{A structured linear fractional transformation revealing the interconnection structure among subsystems in binary matrices $L$ and $K$.} \label{fig:lft} \end{figure} 

Although the structured LFT completely specifies the functional description of the composite system, $H$, the structured LFT does not have enough structural information to specify $H$'s complete computational structure or its associated generalized state space description.  To do so, it would need information about the ``true" structure of each constituent subsystem.  This point may be clear when $S$ is specified by a functional description for each subsystem, such as its transfer function, but it becomes more subtle when $S$ is specified by a generalized state space model for each subsystem.  In this case, it is important to understand that the state space model for each subsystem in $S$ can be {\em any} single-subsystem realization of the associated transfer function, $S_i(s)$, since the structured LFT does not use any information about the internal structure of its subsystems.  To realize the ``true" generalized state description of $H$ an its associated complete computational structure, one must have accurate descriptions of the complete computational structures for each constituent subsystem to complement the ``interconnection" information in the structured LFT.  

The structured LFT reveals the interconnection structure among subsystems, encoded in the binary {\em interconnection matrix}, $N$, in general, and in $L$ and $K$ in particular.  Note that the interconnection structure in $N$ is unaffected by whether the subsystems in $S$ are represented by state space models or transfer functions.  The internal computational structure of subsystems, revealed by state models of subsystems but not transfer function representations of subsystems, is not used when representing the subsystem structure of a system--only the interconnection structure {\em among} subsystems, not {\em within} subsystems, is relevant for this representation.  

Aggregating $L$ and $K$ appropriately to account for the potentially multi-input multi-output nature of the constituent subsystems yields adjacency matrices from which the composite system's subsystem structure can be built.  To accomplish this, let $e_{v_i}$ denote the vector of ones with length equal to the length of vector $v_i$.  We then define the aggregation matrices
\begin{equation}
\begin{array}{ccc}
A_v = \left[\begin{array}{cccc}e_{v_1}^T&0&...&0\\0&e_{v_2}^T&&0\\\vdots&&\ddots&\vdots\\0&0&...&e_{v_q}^T\end{array}\right],&\;\;\;\;&A_w = \left[\begin{array}{cccc}e_{w_1}^T&0&...&0\\0&e_{w_2}^T&&0\\\vdots&&\ddots&\vdots\\0&0&...&e_{w_q}^T\end{array}\right],
\end{array}
\label{eq:agmat}
\end{equation}
and use them to create the adjacency matrices:
\begin{equation}
\begin{array}{ccc}
\mathscr{A}(L) = sgn(A_vL)^T,&\;\;\;\;&\mathscr{A}(K) = sgn(A_vKA_w^T)^T,
\end{array}
\label{eq:adjacency}
\end{equation}
where $sgn(\cdot)$ denotes the sign function, yielding a value of one for positive entries, zero for zero, and negative one for negative entries (which can never occur in this case).  With these definitions, we are now prepared to characterize a system's {\em subsystem structure}:
\begin{definition}[Subsystem Structure]
Given a generalized state space model, as in (\ref{eq:genss}), identified by a structured LFT, $(N, S)$, as in (\ref{eq:lft}) and (\ref{eq:S}) and with associated aggregation matrices as in (\ref{eq:agmat}) and adjacency matrices as in (\ref{eq:adjacency}) , its {\em subsystem structure} is a weighted directed graph $\mathscr{S}$ with vertex set $V(\mathscr S)$ and edge set $E(\mathscr S)$ given by:
\begin{itemize}
\item $V(\mathscr{S}) = \{u_1, ..., u_m, S_1, ..., S_q, y_1, ..., y_p\}$, representing input signals, subsystems, and output signals, respectively.
\item $E(\mathscr{S})$ has an edge from
\begin{itemize}
\item $u_i$ to $S_j$ if $\mathscr{A}(L)_{ij}=1$, labeled $u_i$;
\item $S_i$ to $S_j$ if $\mathscr{A}(K)_{ij}=1$, labeled $w_i$;
\item $S_i$ to $y_j$ if $(A_w)_{ij}=1$, labeled $y_j$.
\end{itemize} 
\end{itemize}
\label{def:sub}
\end{definition}

Note that the subsystem structure is qualitatively different from either the complete computational structure or the manifest structure in a few ways.  First, while all the nodes of either the complete computational structure or the manifest structure represent {\em signals}, the nodes of the subsystem structure represent {\em systems}, namely the subsystems and exosystems associated with the generation of each input or measurement of each output signal.  As a result, we often denote the nodes in the subsystem structure with a different shape, e.g. rectangles instead of circles, to highlight this distinction (see Figure \ref{fig:substr}).  Also, the edges in both the complete computational structure and the manifest structure are labeled to represent {\em systems}, while the edges in the subsystem structure are labeled with the names of {\em signals}.  These distinctions make it clear that the subsystem structure carries the interpretation of a {\em block diagram}, while the other structures are {\em signal flow graphs}.  


The definition of subsystem structure given above characterizes the graph in terms $N$ and $S$.  Nevertheless, the subsystem structure can be obtained directly from the complete computational structure, which not only lends a graphical interpretation to the concept of a subsystem, but naturally facilitates the extension of the definitions to the nonlinear and stochastic setting.  We achieve this by first extending the definition of a {\em manifest node} or {\em manifest signal} of $\mathscr{C}$ to include any node representing a signal identically equal to a manifest signal, $u_i$ or $y_j$.  We then consider the subgraph of $\mathscr{C}$ obtained by 1) removing all input nodes and any outgoing edges leaving them, 2) removing all output nodes and any incoming edges entering them, and 3) removing all outgoing edges leaving any remaining manifest nodes.  This subgraph, $\mathscr{H}$ is the {\em hidden structure} of $\mathscr{C}$, and it immediately reveals its subsystems and their interconnection, as follows:  

\begin{theorem}
Consider a system $H$ characterized by a structured LFT, $(N, S)$.  Construct a complete computational structure for $H$, as in (\ref{eq:genss}), by realizing each subsystem in $S$ with a single-subsystem state space model, and let $\mathscr{C}$ be the resulting complete computational structure.  Then every connected component of $\mathscr{H}$, the hidden structure of $\mathscr{C}$, corresponds to a distinct subsystem in $S$.
\label{LFTlemma}
\end{theorem}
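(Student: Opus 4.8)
The plan is to make the construction of $\mathscr{C}$ explicit from the data $(N,S)$, to read off exactly which edges of $\mathscr{C}$ lie ``inside'' a single subsystem and which ones ``implement'' the interconnection prescribed by $L$ and $K$, and then to verify that passing to the hidden structure $\mathscr{H}$ deletes precisely the interconnection edges. What remains is the vertex-disjoint union, over $i=1,\dots,q$, of the hidden structures of the individual single-subsystem realizations, and the theorem then reduces to the fact that, by the very definition of a single-subsystem model, each such piece is a single connected component.

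First I would write down $\mathscr{C}$. Realize each $S_i$ by a single-subsystem state model with state $x^{(i)}$, input $v_i$, output $w_i$, and matrices $(A_i,B_i,C_i,D_i)$ (allowing $D_i\neq 0$ if $S_i$ is merely proper; if $S_i$ is itself given as a generalized state model then $x^{(i)}$ also carries that subsystem's own auxiliary variables, which changes nothing below). Stack the states as $x$, take the auxiliary-variable vector of (\ref{eq:genss}) to contain the subsystem inputs $v$ and outputs $w$, and impose $y=w$ together with the interconnection equation $v=Lu+Kw$. Then $A$ is block-diagonal with blocks $A_i$; the blocks of $\hat{A},\bar{A},\tilde{A},\bar{B},B$ are assembled from $B_i,C_i,D_i$ composed with the rows of $L$ and $K$; $\bar{C}=I$; and $C=0$, $D=0$. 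The point to extract is the edge bookkeeping: (i) the only edges among state/auxiliary nodes that stay within one index $i$ are those coming from $A_i,B_i,C_i,D_i$ and from the internal structure of $S_i$; (ii) every edge that crosses from the block of $S_i$ to the block of $S_j$ with $j\neq i$, as well as every edge incident on an input node $u_k$, factors through an entry of $L$ or $K$ and hence originates at an input node $u_k$ or at an output node $w_\ell$; and (iii) since $y=w$, every node $w_\ell$ is a \emph{manifest node} of $\mathscr{C}$ in the extended sense (it equals the manifest signal $y_\ell$), and every $y_j$ node is an output node whose only incoming edge is $w_j\to y_j$.

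Second I would compute $\mathscr{H}$. By the construction recipe, forming $\mathscr{H}$ removes all input nodes $u_k$ with their outgoing edges, all output nodes $y_j$ with the edges $w_j\to y_j$, and all \emph{outgoing} edges of the remaining manifest nodes $w_\ell$. By item (ii), this removes every inter-subsystem edge, since such an edge leaves either a $u_k$ node (already deleted) or a $w_\ell$ node (its outgoing edges deleted). Conversely, by item (i), no purely internal edge of any $S_i$ is touched: such an edge is incident only on state/auxiliary nodes of $S_i$ and on $w_i$, and only the outgoing edges of $w_i$ are cut, whereas the internal edges $x^{(i)}\to w_i$ are incoming to $w_i$ and survive. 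Hence $\mathscr{H}=\bigsqcup_{i=1}^{q}\mathscr{H}_i$, where $\mathscr{H}_i$ is the hidden structure of the chosen realization of $S_i$ (including the node $w_i$ with its incoming edges), and the blocks are vertex-disjoint because the subsystems are \emph{distinct}, i.e.\ share no state or auxiliary variable. Finally, since each $S_i$ is realized by a single-subsystem model, $\mathscr{H}_i$ is connected --- that is precisely what ``no further division of the subsystem is possible'' means --- so the connected components of $\mathscr{H}$ are exactly the $\mathscr{H}_i$, one per subsystem, and $\mathscr{H}_i\mapsto S_i$ is the desired bijection.

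The hard part will be the edge bookkeeping in the proper case $D_i\neq 0$: there $\tilde{A}$ is nonzero, and one must invoke well-posedness of the LFT to know that the constructed object is a legitimate generalized state model (with $I-\tilde{A}$, equivalently $I-KS$, invertible); but even then the algebraic loops created by $\tilde{A}$ live entirely among the manifest $w$-nodes, so those edges are outgoing from manifest nodes and are deleted in $\mathscr{H}$, leaving the argument intact. A secondary point I would be careful to state explicitly is that the single-subsystem hypothesis is essential: a subsystem possessing, say, an unobservable mode would contribute a disconnected $\mathscr{H}_i$ and the component count would exceed $q$, so the theorem genuinely depends on each $S_i$ being indivisible in the sense defined above.
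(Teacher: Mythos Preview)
Your proposal is correct and follows essentially the same approach as the paper: the paper's proof is a terse three-sentence version of your argument, observing that single-subsystem realizations are internally connected, that $S$ being block-diagonal with manifest outputs forces all cross-subsystem edges to leave manifest nodes, and that forming $\mathscr{H}$ therefore severs exactly those edges. Your explicit edge bookkeeping, the treatment of the proper case $D_i\neq 0$, and the remark on why the single-subsystem hypothesis is essential are all refinements the paper omits, but the underlying idea is the same.
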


\begin{proof}
Since each subsystem is realized by a single-subsystem state space realization, variables internal to each subsystem correspond to nodes of $\mathscr{C}$ that are connected to each other.  Moreover, since all outputs of $S$ are manifest, and $S$ is diagonal, these connected components can only be interconnected by manifest signals.  By removing all outgoing edges from internal manifest nodes in $\mathscr{C}$, as well as removing all input and output nodes and their associated edges, $\mathscr{H}$ isolates each subsystem so the remaining connected components of $\mathscr{H}$ correspond to the subsystems in $S$.    
\end{proof}
\noindent The next example illustrates this procedure of obtaining a system's subsystem structure directly from its complete computational structure.

\begin{example}
Consider the generalized state space model of two subsystems in feedback from Example 2, given by Equation (\ref{eq:exgssm}).  Figures \ref{fig:genccs} and \ref{fig:ccs} illustrate the system's complete computational structure, $\mathscr{C}$, and we can generate its subsystem structure by identifying the connected components in the hidden structure of $\mathscr{C}$, as demonstrated in Figure \ref{fig:developsubstructure}:

\begin{figure}
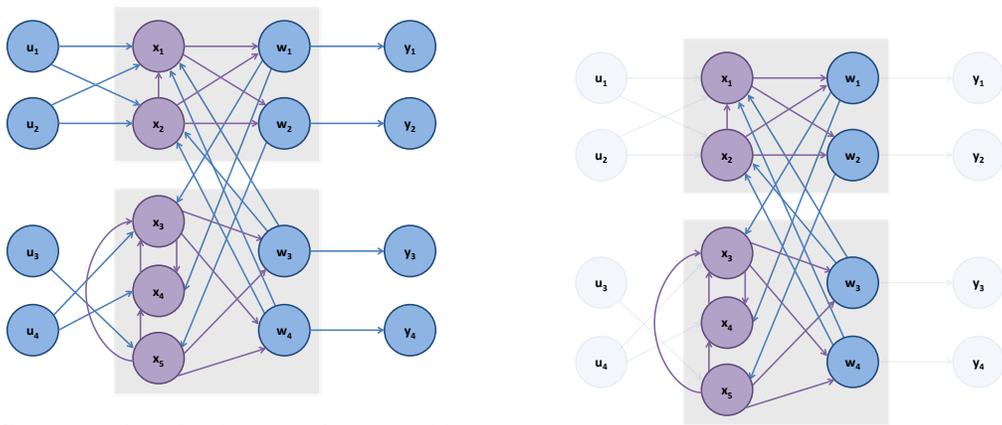
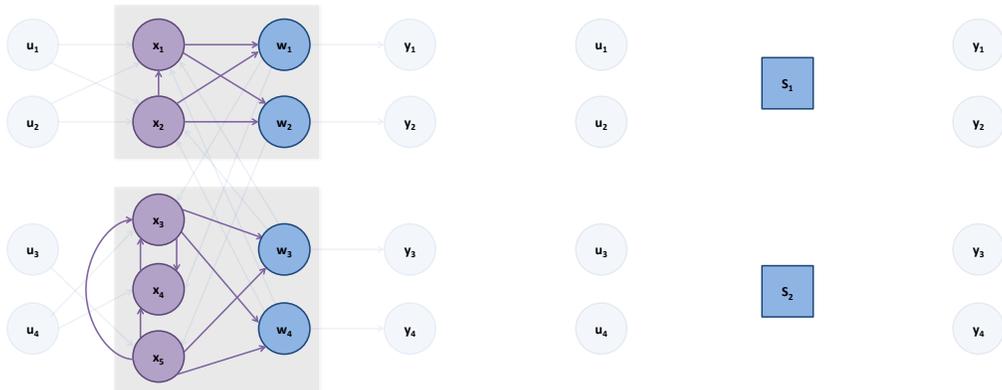
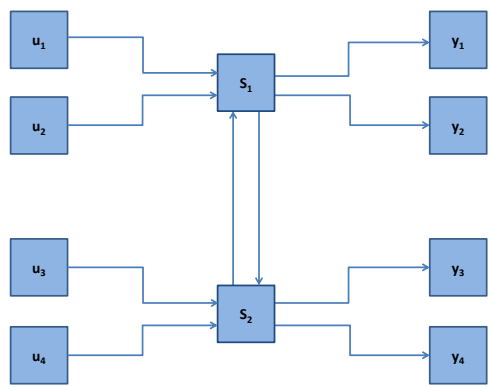

        \centering
        \begin{subfigure}[b]{0.45\textwidth}
                \includegraphics[page=1,width=\textwidth]{examples}
                \caption{Step 1: Identify the manifest variables in the Complete Computational Structure (shaded blue).}
                \label{fig:ccs}
        \end{subfigure}
        \begin{subfigure}[b]{0.45\textwidth}
                \includegraphics[page=4,width=\textwidth]{examples}
                \caption{Step 2: Remove all input and output nodes, along with any edges adjacent to these nodes..}
                \label{fig:remnodes}
        \end{subfigure}
        \begin{subfigure}[b]{0.45\textwidth}
                \includegraphics[page=5,width=\textwidth]{examples}
                \caption{Step 3: Remove any outgoing edges from any remaining manifest variables.}
                \label{fig:remedges}
        \end{subfigure}
        \begin{subfigure}[b]{0.45\textwidth}
                \includegraphics[page=6,width=\textwidth]{examples}
                \caption{Step 4: Remaining connected components correspond to subsystems.}
                \label{fig:conncomp}
        \end{subfigure}
        \begin{subfigure}[b]{0.5\textwidth}
                \includegraphics[page=7,width=\textwidth]{examples}
                \caption{Step 5: Reintroduce the input and output variables as exosystem nodes, and replace all removed edges, compressing any duplicate edges into a single edge.}
                \label{fig:substr}
        \end{subfigure}
        \caption{Subsystem structure, built from a system's complete computational structure.}\label{fig:developsubstructure}
\end{figure}

The process of constructing a system's subsystem structure from its complete computational structure involves 1) identifying all manifest nodes in $\mathscr{C}$, 2) removing all input and output nodes and their adjacent edges, 3) removing all outgoing edges from any remaining manifest nodes.  These three steps construct the hidden structure, $\mathscr{H}$, and each connected component in $\mathscr{H}$ corresponds to a subsystem.  Compress these connected components into single subsystem nodes and replace the input and output nodes as exosystems (instead of signals).  Replace all removed edges following the convention that if a node in $\mathscr{C}$ is no longer in $\mathscr{S}$, connect the edge to the corresponding subsystem node.  This may lead to multiple edges between nodes in $\mathscr{S}$ (e.g. between subsystems), so we compress these edges into a single edge and change the label to be a vector label, reflecting the multiple signals on that edge.

Now, compare the resulting subsystem structure with the results we obtain if we work directly from the equations defining the original subsystems in Example 2 leading up to Equation (\ref{eq:exgssm}).  If we find the transfer function of each subsystem individually, build the associated subsystem matrix $S$, and then interconnect appropriately, we recover the following structured LFT:

\begin{equation}
N = \begin{bmatrix} 0 & 0 & 0 & 0 & 1 & 0 & 0 & 0\\ 0 & 0 & 0 & 0 & 0 & 1 & 0 & 0 \\ 0 & 0 & 0 & 0 & 0 & 0 & 1 & 0 \\ 0 & 0 & 0 & 0 & 0 & 0 & 0 & 1 \\ 1 & 0 & 0 & 0 & 0 & 0 & 0 & 0 \\ 0 & 1 & 0 & 0 & 0 & 0 & 0 & 0 \\ 0 & 0 & 0 & 0 & 0 & 0 & 1 & 0 \\ 0 & 0 & 0 & 0 & 0 & 0 & 0 & 1 \\ 0 & 0 & 1 & 0 & 0 & 0 & 0 & 0 \\ 0 & 0 & 0 & 1 & 0 & 0 & 0 & 0 \\ 0 & 0 & 0 & 0 & 1 & 0 & 0 & 0 \\ 0 & 0 & 0 & 0 & 0 & 1 & 0 & 0 \end{bmatrix} \label{eq:ns}\end{equation} 
\[S = \frac{1}{s^2+3s+2}\left[\begin{array}{cccc|cccc} 0 & s+2 & 0 & s+2 & 0 & 0 & 0 & 0 \\ s+1 & 1 & s+1 & 1 & 0 & 0 & 0 & 0 \\\hline 0 & 0 & 0 & 0 & \frac{2}{s+3} & s+2 & \frac{2}{s+3} & s+2 \\ 0 & 0 & 0 & 0 & \frac{s+1}{s+3} & 0 & \frac{s+1}{s+3} & 0 \end{array}\right]\]

Compare the results of the structured LFT with the signal structure in Figure \ref{fig:substr}.  Notice that building the subsystem structure according to Definition \ref{def:sub} leads to the same result; both processes construct the same graph.  Nevertheless, building subsystem structure directly from $\mathscr{C}$ sheds insight into the meaning of subsystems, as the connected components of the hidden structure of $\mathscr{C}$. 

\end{example}
These procedures uniquely specify $(N,S)$ and $\mathscr{S}$ from a generalized state space model and its complete computational structure, $\mathscr{C}$. This implies that the system models and their associated structural representations considered so far produce a totally ordered set with respect to the relation, ``uniquely specified by."  These are, in order of increasing structural informativity: 
\begin{enumerate}
\item Functional system descriptions and the manifest structure, (which are uniquely specified by)
\item Structured LFTs and the subsystem structure, (which are uniquely specified by) 
\item Generalized state space models and the complete computational structure. 
\end{enumerate}
The next section considers an alternative approach for representing systems, focusing on the interaction among manifest signals as opposed to the interconnection among subsystems.

\subsection{Dynamical Structure Functions and the Signal Structure}
\label{sec:dsf}

One of the difficulties in learning a system's subsystem structure from data is that it necessarily perfectly partitions the system states into subsystem groups, so one must be able to identify the the correct subsystem for each state variable--even those that are ``hidden," or not directly manifest.  This section considers a system representation that precisely characterizes the interaction between manifest signals without drawing any conclusions about ``hidden" variables.  This ability to remain agnostic about the structural role of hidden variables not only makes this representation easier to learn from data, but it also makes it extremely useful for describing systems with a ``fluidic" component that makes the very idea of subsystems difficult to conceptualize, such as chemical reaction processes or market behavior.  

This representation, called the {\em dynamical structure function} (DSF), like the structured LFT and the subsystem structure, is part of a totally ordered set with respect to the relation, ``uniquely specified by."  This is, in order of increasing structural informativity: 
\begin{enumerate}
\item Functional system descriptions and the manifest structure, (which are uniquely specified by)
\item Dynamical structure functions and the signal structure, (which are uniquely specified by) 
\item Zero-intricacy state space models and their associated computational structure, (which are uniquely specified by)
\item Generalized state space models and the complete computational structure.
\end{enumerate}
Note that the structured LFT and the subsystem structure are not listed as part of this ordering.  This is because, although the subsystem structure falls between the manifest and complete computational structures (as described in the previous section), its relationship to signal structure and the zero intricacy realization is ambiguous in general, depending on each case individually; more on this will be discussed later.  Likewise, it is interesting to note that the DSF naturally scales between functional system descriptions and zero-intricacy state space models depending on the number of independent measured outputs.  That is to say, the DSF of a single-output system is equivalent to its functional description, while the DSF of the same system, except with full state measurements, is equivalent to its state space model.  These ideas will be made precise next. 

Define a system's DSF by considering the zero-intricacy realization, $(A,B,C,D)$, of a generalized state space model, as in (\ref{eq:ssequation}); this is the standard state space model generally considered in the literature.  Auxiliary variables to characterize the intricacy of functional composition in a generalized state space model, as in (\ref{eq:genss}), are only necessary for specifying subsystem structure; they play no role in DSF theory.  Without loss of generality, let $p_1\leq p$ be the rank of $C$ and assume it has the form:     
\begin{equation}
C = \begin{bmatrix} C_{11} & C_{12} \\ C_{21} & C_{22} \end{bmatrix}
\label{eq:C}
\end{equation}
where $C_{11}\in\mathbb{R}^{p_1 \times p_1}$ is invertible.  Note that any system can be put into this form with a simple renumbering of the output signals and the state variables.  

Let $E$ be any basis of the null space of $C$, and partition $E=\begin{bmatrix}E_1&E_2\end{bmatrix}^T$ commensurate with the partitioning of $C$ in (\ref{eq:C}).  Note that $E_2$ is necessarily square, since $E$ is size $n\times (n-p_1)$ and $E_1$ is $p_1\times (n-p_1)$, implying $E_2$ has dimensions $(n-p_1)\times(n-p_1)$.  Moreover, $E_2$ is necessarily invertible.  This is seen by contradiction: suppose $E_2$ is not invertible.  Then there is a vector $z\in\mathbb{R}^{(n-p_1)}\neq 0$ such that $E_2z=0$.  This would mean, however, that $Ez=0$, since $C_{11}E_1+C_{12}E_2=0$ and $C_{11}$ invertible together imply that $E_1 = -C_{11}^TC_{12}E_2$, but $Ez=0$ is a contradiction because $E$ is, by definition, a collection of linearly independent vectors forming a basis for the null space of $C$.

Now, consider a state transformation on (\ref{eq:ssequation}) of the form $z = Tx$, where
\begin{equation}
T = \begin{bmatrix} C_{11}^{-1} & E_1 \\ 0 & E_2 \end{bmatrix},\;\;\; \text{ and }\;\;\; T^{-1} = \begin{bmatrix}C_{11} & C_{12} \\ 0 & E_{2}^{-1}\end{bmatrix}.\label{eq:transform}\end{equation}
This state transformation yields a system of the form:
\begin{equation}\begin{array}{rcl}\begin{bmatrix}\dot{z}_1 \\ \dot{z}_2\end{bmatrix} & = & \begin{bmatrix} A_{11} & A_{12} \\ A_{21} & A_{22}\end{bmatrix}\begin{bmatrix}z_1 \\ z_2\end{bmatrix}+ \begin{bmatrix}B_1 \\ B_2\end{bmatrix}u \\ \begin{bmatrix}y_1 \\ y_2\end{bmatrix} & = & \begin{bmatrix}I & 0 \\ C_{21}C_{11}^{-1}& 0\end{bmatrix}\begin{bmatrix}z_1 \\ z_2\end{bmatrix} + \begin{bmatrix}D_1 \\ D_2\end{bmatrix}u \end{array}\label{eq:abcd}\end{equation}
where $z_{1}\in\mathbb{R}^{p_1}$, $z_{2}\in\mathbb{R}^{n-p_1}$, $y_{1}\in\mathbb{R}^{p_1}$, $y_2\in\mathbb{R}^{p-p_1}$ and $u\in\mathbb{R}^{m}$.
To avoid unnecessary notation, we will refer to the state matrices in (\ref{eq:abcd}) as $(A,B,C,D)$ from (\ref{eq:abcd}), as opposed to the equivalent but distinct matrices $(A,B,C,D)$ from (\ref{eq:ssequation}); we don't expect this slight abuse of notation to cause any confusion.  By way of comparison between these two realizations, however, a few comments may be in order.  First, note that the transformation resulting in ($\ref{eq:abcd})$ has redefined the system in terms of the first $p_1$ signals in $y$, $y_1$, so that $z_1$ are manifest states (once $D_1u$ has been considered) and $z_2$ are hidden states.  This suggests that the computational structure of $A$ from (\ref{eq:abcd}), in some sense, describes how the manifest states affect each other, both directly and indirectly through the hidden states, while the original description of $A$ from (\ref{eq:ssequation}) scrambled this information through the change of basis, $T$.  Next, note that although the manifest states $z_1$ are clearly observable, $z_2$ being ``hidden" does not necessarily imply that they are unobservable--just that they must be observed through their influence on $z_1$.  In general we make no assumptions about the observability or controllability of the system, and the structure of (\ref{eq:abcd}) only ensures that any unobservable states must belong to $z_2$, but not that every state in $z_2$ is unobservable.  Finally, notice that the additional $p-p_1$ outputs in $y_2$ are redundant, merely ``snapping" onto the rest of the system and playing no role whatsoever in how information flows from $u$ to $y_1$.    

Finding the DSF from (\ref{eq:abcd}) is now straight forward.  In \cite{dsfpaper} the DSF was defined for systems of the form $(A,B,[\begin{array}{cc}I&0\end{array}], 0)$; we follow that derivation here and extend it for the additional structure in (\ref{eq:abcd}).  Taking Laplace transformations of the state equation in (\ref{eq:abcd}) yields:   

\begin{equation}\begin{bmatrix}sZ_1 \\ sZ_2\end{bmatrix} = \begin{bmatrix}sA_{11} & A_{12} \\ A_{21} & A_{22}\end{bmatrix}\begin{bmatrix} Z_1 \\ Z_2\end{bmatrix} + \begin{bmatrix}B_1 \\ B_2\end{bmatrix}U \label{eq:ztran}\end{equation}
where $Z(s)$ is the Laplace transform of $z(t)$ and $U(s)$ is the transform of $u(t)$.  Solving for $Z_2$ in the second equation and substituting it into the first equation in (\ref{eq:ztran}) then yields:
\begin{equation}
sZ_1 = WZ_1+VU
\label{eq:WV}
\end{equation}
where $W = A_{11}+A_{12}(sI-A_{22})^{-1}A_{21}$ and $V = B_1 + A_{12}(sI-A_{22})^{-1}B_2$. Let $D_W$ to be the matrix of diagonal entries of $W$, and subtract $D_WZ_1$ from both sides of (\ref{eq:WV}), yielding \begin{equation}Z_1=QZ_1+PU\label{eq:z}\end{equation} where $Q = (sI-D)^{-1}(W-D_W)$ and $P=(sI-D_W)^{-1}V$.  These matrix functions of the Laplace variable, $(Q(s),P(s))$ would be the DSF of the system with output structure $y=[\begin{array}{cc}I&0\end{array}]z$, and it is relatively easy to see that $Q$ and $P$ have certain properties, such as being strictly proper rational functions, or that the diagonal entries of $Q$ are identically zero.  We extend this definition of the DSF by noting from (\ref{eq:abcd}) that $Z_1=Y_1-D_1U$.  Substituting into (\ref{eq:z}) yields
\begin{equation}\begin{bmatrix}Y_1 \\ Y_2\end{bmatrix}=\bar{Q}Y_1+\bar{P}U\end{equation} where $\bar{Q} = \begin{bmatrix}Q \\ C_{21}C_{11}^{-1}\end{bmatrix}$, $\bar{P} = \begin{bmatrix}P + (I-Q)D_1 \\ D_2 - C_{21}C_{11}^{-1}D_1\end{bmatrix}$. Note that when $C = \begin{bmatrix}I & 0\end{bmatrix}$ and $D = 0$, $\bar{Q} = Q$ and $\bar{P} = P$, so there should be no confusion referring to either $(\bar{Q},\bar{P})$ or $(Q,P)$ as the dynamical structure function, since $(\bar{Q},\bar{P})$ simply extends the previous definition to the general case.  

\begin{theorem}
Given a zero-intricacy state space model as in (\ref{eq:ssequation}) with $C$ of the form given in (\ref{eq:C}) with $C_{11}$ invertible, the dynamical structure function $(Q,P)$ is uniquely specified.
\end{theorem}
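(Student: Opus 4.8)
The construction of $(Q,P)$ from the data $(A,B,C,D)$---with $C$ already presented in the normalized form (\ref{eq:C}) and $C_{11}$ invertible---is deterministic except for one arbitrary choice: the basis $E$ of the null space of $C$ used to build the transformation (\ref{eq:transform}). So the entire content of the theorem is that the resulting pair $(Q,P)$ does not depend on which basis $E$ is chosen, and the plan is to (i) confirm the construction is well-defined for every admissible $E$, (ii) parametrize the set of admissible $E$, and (iii) show the map $E\mapsto(Q,P)$ is constant.

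For step (i) I would note that $E_2$ is invertible (established in the paragraph preceding (\ref{eq:transform})) and $C_{11}$ is invertible by hypothesis, so $T$ is a bona fide coordinate change and (\ref{eq:abcd}) is well-formed; moreover $\det(sI-A_{22})$ is the characteristic polynomial of the constant matrix $A_{22}$ and $\det(sI-D_W)=\prod_i\bigl(s-(D_W)_{ii}\bigr)$, both nonzero polynomials, so $W$, $V$, $Q=(sI-D_W)^{-1}(W-D_W)$, and $P=(sI-D_W)^{-1}V$ are well-defined rational matrices for every choice of $E$. For step (ii) I would use that any two bases $E,E'$ of $\ker C$ satisfy $E'=EM$ for a unique invertible $M\in\mathbb{R}^{(n-p_1)\times(n-p_1)}$, hence $E_1'=E_1M$ and $E_2'=E_2M$; tracing this through (\ref{eq:transform}) shows that the realization (\ref{eq:abcd}) obtained from $E'$ is related to the one obtained from $E$ by an invertible change of coordinates on the hidden-state block $z_2$ alone (one checks it is $z_2\mapsto M^{-1}z_2$), with $z_1$ and the output equation untouched, so that $A_{11}'=A_{11}$, $A_{12}'=A_{12}M$, $A_{21}'=M^{-1}A_{21}$, $A_{22}'=M^{-1}A_{22}M$, $B_1'=B_1$, $B_2'=M^{-1}B_2$, while $C$ and $D$ are unchanged.

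For step (iii) I would substitute these relations into $W=A_{11}+A_{12}(sI-A_{22})^{-1}A_{21}$ and $V=B_1+A_{12}(sI-A_{22})^{-1}B_2$. Using the identity $(sI-M^{-1}A_{22}M)^{-1}=M^{-1}(sI-A_{22})^{-1}M$, the factors of $M$ and $M^{-1}$ telescope, so $W'=W$ and $V'=V$; hence $D_W$ (the diagonal part of $W$) is unchanged, and therefore so are $Q$, $P$, and in turn $\bar Q$, $\bar P$, which are assembled from $Q$, $P$ and the already-fixed blocks $C_{11},C_{21},D_1,D_2$. This would establish that $(Q,P)$---together with its extension $(\bar Q,\bar P)$---is uniquely determined by the given state space model.

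The only part that calls for genuine care is step (ii): checking that a change of null-space basis propagates through (\ref{eq:transform}) as a pure similarity acting on the hidden states alone, and confirming that the hypothesis fixing $C$ in the partitioned form (\ref{eq:C}) (with $p_1=\operatorname{rank}C$) leaves $E$ as the only free parameter in the construction. Beyond that bookkeeping no new idea is needed: the statement amounts to saying that eliminating the hidden states commutes with a change of coordinates on those states, and the algebraic identity for $(sI-A_{22})^{-1}$ does all the real work.
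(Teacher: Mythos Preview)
Your proposal is correct and follows essentially the same route as the paper's proof: both arguments reduce the question to showing that $W$ and $V$ are invariant under the only free choice in the construction (the basis $E$ of $\ker C$, equivalently the invertible block $E_2$), and both conclude by the telescoping of the similarity $(sI-M^{-1}A_{22}M)^{-1}=M^{-1}(sI-A_{22})^{-1}M$. The only cosmetic difference is that the paper writes the transformed blocks $A_{ij},B_i$ directly in terms of the original (pre-transformation) data and observes that $E_2$ drops out of $W$ and $V$ ``by direct substitution,'' whereas you compare two admissible choices $E$ and $E'=EM$ and track the induced hidden-state similarity $z_2\mapsto M^{-1}z_2$; your version makes the cancellation more transparent, but the mathematical content is the same.
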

\begin{proof}
We will show that although a transformation of the form given in (\ref{eq:transform}) results in different state equations, they still produce the same dynamical structure function. Given that $E_1 = -C_{11}^TC_{12}E_2$, the transformation resulting from any particular choice of $E_2$ is given by
\[T = \begin{bmatrix}C_{11}^{-1} & -C_{11}^TC_{12}E_2 \\ 0 & E_2 \end{bmatrix}\]
which results in a state space transformation given in (\ref{eq:abcd}), where $A$ and $B$ have the form:

\begin{equation}
\begin{array}{rcl}
\begin{bmatrix}A_{11}&A_{12}\\A_{21}&A_{22}\end{bmatrix} & = & \begin{bmatrix}C_{11} & C_{12} \\ 0 & E_{2}^{-1}\end{bmatrix}\begin{bmatrix}\hat{A}_{11} & \hat{A}_{12} \\ \hat{A}_{21} & \hat{A}_{22}\end{bmatrix}\begin{bmatrix}C_{11}^{-1} & -C_{11}^TC_{12}E_2 \\ 0 & E_2 \end{bmatrix}, \\ 
\implies A_{11} & = & (C_{11}\hat{A}_{11}+C_{12}\hat{A}_{21})C_{11}^{-1} \\ 
\implies A_{12} & = & (C_{11}(\hat{A}_{12}-\hat{A}_{11}C_{11}^{-1}C_{12}) + C_{12}(\hat{A}_{22}-\hat{A}_{21}C_{11}^{-1}C_{12}))E_2 \\ 
\implies A_{21} & = & E_{2}^{-1}(\hat{A}_{21}C_{11}^{-1}) \\ 
\implies A_{22} & = & E_{2}^{-1}(\hat{A}_{22}-\hat{A}_{21}C_{11}^{-1}C_{12})E_{2}, \\ 
\begin{bmatrix}B_{1}\\ B_{2}\end{bmatrix} & = & \begin{bmatrix}C_{11} & C_{12} \\ 0 & E_{2}^{-1}\end{bmatrix}\begin{bmatrix}\hat{B}_{1}\\ \hat{B}_{2}\end{bmatrix} = \begin{bmatrix}C_{11}\hat{B}_1+C_{12}\hat{B}_2 \\ E_{2}^{-1}\hat{B}_2\end{bmatrix}
\end{array}\label{eq:abtrans}\end{equation}
where $\hat{A}$ and $\hat{B}$ are the untransformed zero intricacy state matrices.

Clearly, different choices of $E_2$ can lead to considerably different state matrices in (\ref{eq:abtrans}), we will now show that for different choices of $E_2$ the dynamical structure function does not change. From (\ref{eq:WV}), we know that
\begin{equation}\begin{array}{rcl}W & = & A_{11}+A_{12}(sI-A_{22})^{-1}A_{21}, \\  V & = & B_1 + A_{12}(sI-A_{22})^{-1}B_2\end{array}\end{equation}
which by direct substitution is invariant to perturbations in $E_2$. Invariance of $W$ and $V$ imply invariance of $Q$ and $P$, which completes the proof.
\end{proof}

The graphical representation of the dynamical structure function is known as the \emph{signal structure} of a system and is denoted $\mathscr{W}$, with a vertex set $V(\mathscr{W})$ and edge set $E(\mathscr{W})$, \cite{structure}. The elements of a system's signal structure is defined to be:

\begin{itemize}
\item $V(\mathscr{W}) = \{u_{1}, ..., u_{m}, y_{11}, ..., y_{1p_1}, y_{21}, ..., y_{2p_2}\}$, each representing a manifest variable of the system with $p_2=p-p_1$, and
\item $E(\mathscr{W})$ contains an edge from $v_i \in V(\mathscr{W})$ to $v_j \in V(\mathscr{W})$ if the associated entry of $\bar{Q}$ and $\bar{P}$ is nonzero.
\end{itemize}

Unlike the subsystem structure, the signal structure uses circular nodes to denote signals rather than systems, while the edges between these signals represent systems since it is a condensation graph of the signal flow representation of the complete computational structure.
\begin{example}

Given the generalized state space model in (\ref{eq:exgssm}) with complete computational structure in Figure \ref{fig:genccs}, the dynamical structure function is given in (\ref{eq:dsfqp}), the procedure for determining the corresponding signal structure from a system's generalized state space model is then outlined in Figure \ref{fig:developsigstructure}.

\begin{equation} Q = \begin{bmatrix} 0 & 0 & 0 & \frac{1}{s+1} \\ \frac{1}{s+2} & 0 & \frac{1}{s+2} & 0 \\ 0 & \frac{1}{s+1} & 0 & \frac{2}{s+1} \\ \frac{1}{s^2+5s+6} & 0 & 0 & 0 \end{bmatrix} \text{,   } P = \begin{bmatrix} 0 & \frac{1}{s+1} & 0 & 0 \\ \frac{1}{s+2} & 0 & 0 & 0 \\ 0 & 0 & 0 & \frac{1}{s+1} \\ 0 & 0 & \frac{1}{s^2+5s+6} & 0\end{bmatrix} \label{eq:dsfqp}\end{equation}

\begin{figure}[h!]
        \centering
        \begin{subfigure}[b]{0.4\textwidth}
                \includegraphics[page=1,width=\textwidth]{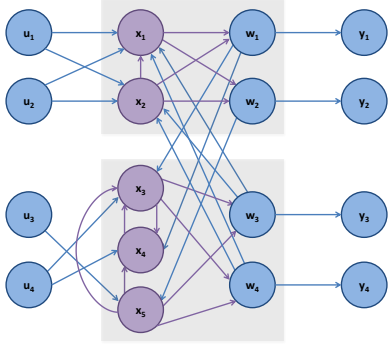}
                \caption{Step 1: Begin with a generalized Complete Computational Structure.}
                \label{fig:bgenccs}
        \end{subfigure}
        \begin{subfigure}[b]{0.4\textwidth}
                \includegraphics[page=8,width=\textwidth]{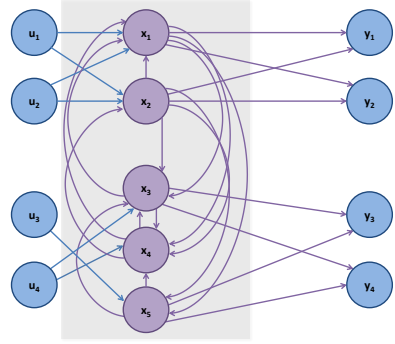}
                \caption{Step 2: Find the corresponding zero intricacy Complete Computational Structure.}
                \label{fig:bziccs}
        \end{subfigure}
        \begin{subfigure}[b]{0.4\textwidth}
                \includegraphics[page=9,width=\textwidth]{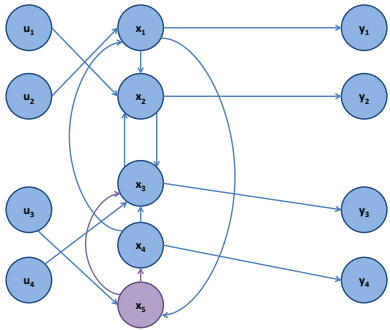}
                \caption{Step 3: Transform the system so that $C= \begin{bmatrix}I & 0\end{bmatrix}$. If $C_{11}$ is invertible, then the transformed system does not change the structure of the signal structure.}
                \label{fig:transccs}
        \end{subfigure}
        \begin{subfigure}[b]{0.4\textwidth}
                \includegraphics[page=10,width=\textwidth]{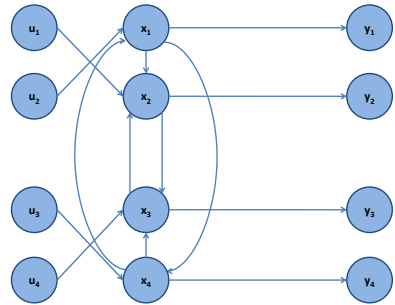}
                \caption{Step 4: Remove hidden nodes from the system, maintaining pathways from manifest variable $i$ to manifest variable $j$ through the removed hidden nodes.}
                \label{fig:remhidcss}
        \end{subfigure}
        \begin{subfigure}[b]{0.5\textwidth}
                \includegraphics[page=11,width=\textwidth]{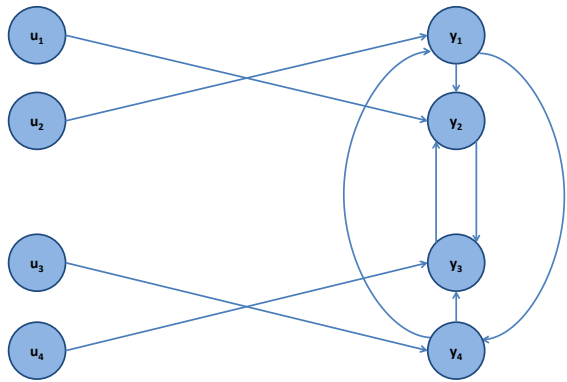}
                \caption{Step 5: Rename manifest state variables, $x$, to their corresponding output variables, $y$, while removing the edges from manifest states to outputs along with the corresponding node.}
                \label{fig:sigstr}
        \end{subfigure}
        \caption{Signal structure, built from a system's complete computational structure}\label{fig:developsigstructure}
\end{figure}
\end{example}

\subsection{Relationship to Other System Representations}

The transfer function associated with a given dynamical structure function is given by \begin{equation}H(s) = (I-Q)^{-1}P\label{eq:tfdsf1}\end{equation} which is found easily from (\ref{eq:z}). We note that $I-Q$ is invertible since $Q$ is a square, hollow transfer function matrix, so $I-Q$ will always have full rank. Necessary and sufficient conditions for determing a dynamical structure function given a system's transfer function were developed in \cite{nessandsuff}. This process is known as \emph{network reconstruction} and is discussed in detail in Section \ref{sec:netrecon}.

Comparing the signal structure to subsystem structure, Example \ref{ex:twosub} shows that it is possible for a system's signal structure to be consistent with two or more subsystem structure representations. Example \ref{ex:twosig} shows it is also possible for a system's subsystem structure to be consistent with two or more dynamical structure functions. The implication of this result is that these two partial structure system representations denote two different notions of structure within a system.

\begin{example}
Given the complete computational structure shown in Figure \ref{fig:ssccs1}, the associated subsystem structure was found to be that shown in Figure \ref{fig:ss1}. 

Given a second complete computational structure in Figure \ref{fig:ssccs2}, the associated subsystem structure is shown in Figure \ref{fig:ss2}. Note that this complete computational structure is the same structure as the computational structure of the zero intricacy realization of Figure \ref{fig:ssccs1}. The two are distinguished by the fact that the zero intricacy structure in Figure \ref{fig:ziccs} is a computational structure, meaning that it is not complete and the associated complete computational structure required auxiliary variables to model various compositions of functions. The complete computational structure given in Figure \ref{fig:ssccs2}, however, has the same structure, but no auxiliary variables were utilized for composition of functions, so the structure is considered complete.

The corresponding signal structure for both complete computational structures is then given in Figure \ref{fig:samesig}, thus we have shown that it is possible for a single signal structure to be consistent with multiple subsystem structures.

\label{ex:twosub}
\begin{figure}[h!]
        \centering
        \begin{subfigure}[b]{0.45\textwidth}
                \includegraphics[page=1,width=\textwidth]{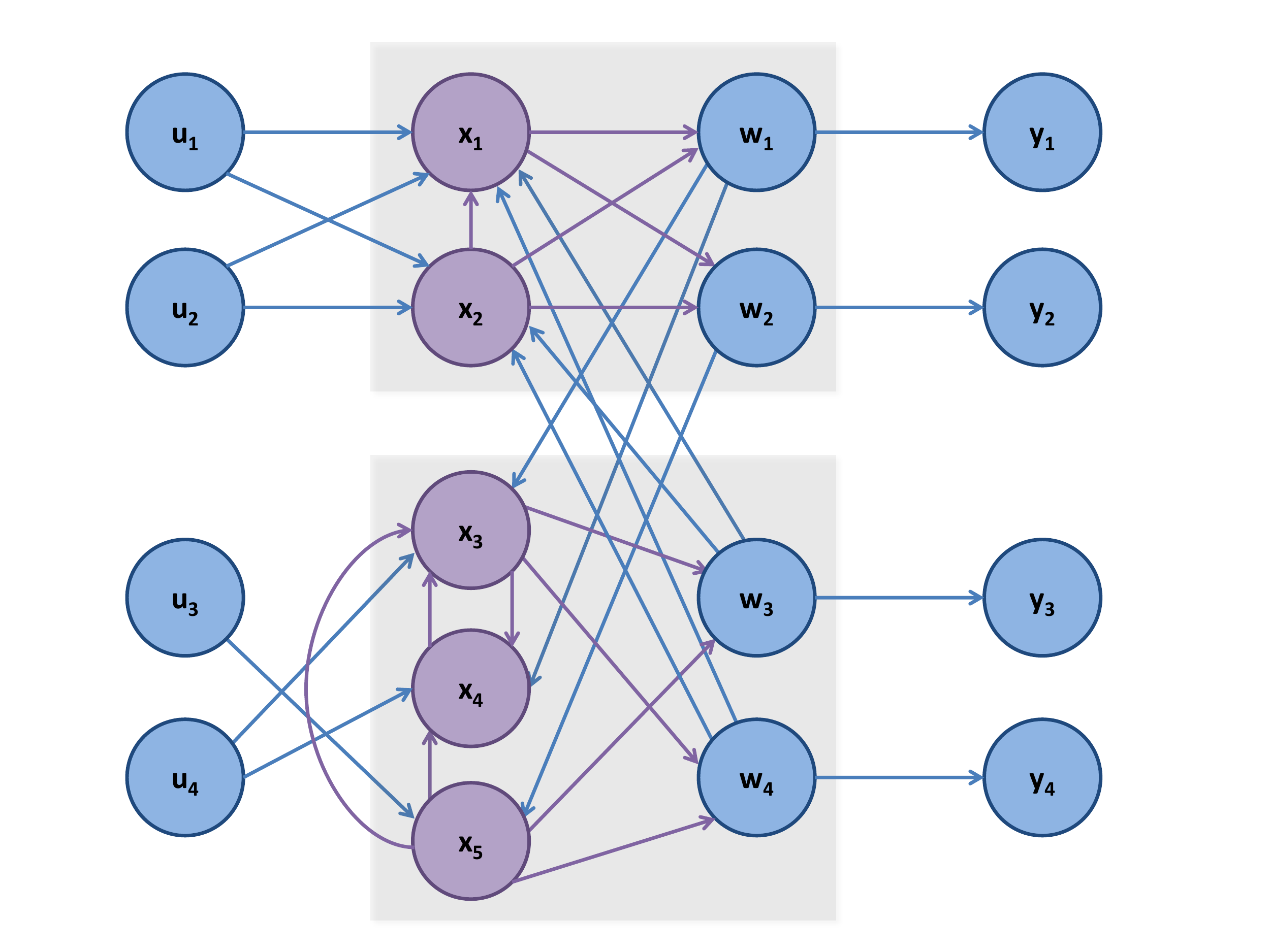}
                \caption{The complete computational structure of a system.}
                \label{fig:ssccs1}
        \end{subfigure}
        \begin{subfigure}[b]{0.45\textwidth}
                \includegraphics[page=2,width=\textwidth]{dsfsubexample2}
                \caption{This yields a subsystem structure of two systems in feedback.}
                \label{fig:ss1}
        \end{subfigure}
        \begin{subfigure}[b]{0.45\textwidth}
                \includegraphics[page=3,width=\textwidth]{dsfsubexample2}
                \caption{A complete computational structure without any intricacy variables.}
                \label{fig:ssccs2}
        \end{subfigure}
        \begin{subfigure}[b]{0.45\textwidth}
                \includegraphics[page=4,width=\textwidth]{dsfsubexample2}
                \caption{This yields a new subsystem structure with a single subsystem.}
                \label{fig:ss2}
        \end{subfigure}
        \begin{subfigure}[b]{0.55\textwidth}
                \includegraphics[page=5,width=\textwidth]{dsfsubexample2}
                \caption{The signal structure is the same for both complete computational structures.}
                \label{fig:samesig}
        \end{subfigure}
        \caption{Signal Structure consistent with Two Subsystem Structures}\label{fig:samestructure}
\end{figure}
\end{example}

\begin{example}
Given the complete computational structure in Figure \ref{fig:remedge}, which is the complete computational structure from Figure \ref{fig:genccs} with an edge removed (highlighted in red), the associated subsystem structure (shown in Figure \ref{fig:subsame}) does not change.

The associated computational structure of Figure \ref{fig:remedge} found by determining the zero intricacy state space model is given in Figure \ref{fig:ziremedges} and is similar to the computational structure given in Figure \ref{fig:ziccs} although with an edge missing (again, marked in red).

Transforming the system to get $C = \begin{bmatrix} I & 0 \end{bmatrix}$, yields the structure given in Figure \ref{fig:zitransmoreedges}, which is similar to the transformed structure given in Figure \ref{fig:transccs}, with an extra edge, highlighted in red. The associated signal structure is then given in Figure \ref{fig:dsfdiff}, also containing an extra edge meaning the subsystem structure given in Figure \ref{fig:subsame} is consistent with multiple signal structures.

\label{ex:twosig}
\begin{figure}[h!]
        \centering
        \begin{subfigure}[b]{0.45\textwidth}
                \includegraphics[page=2,width=\textwidth]{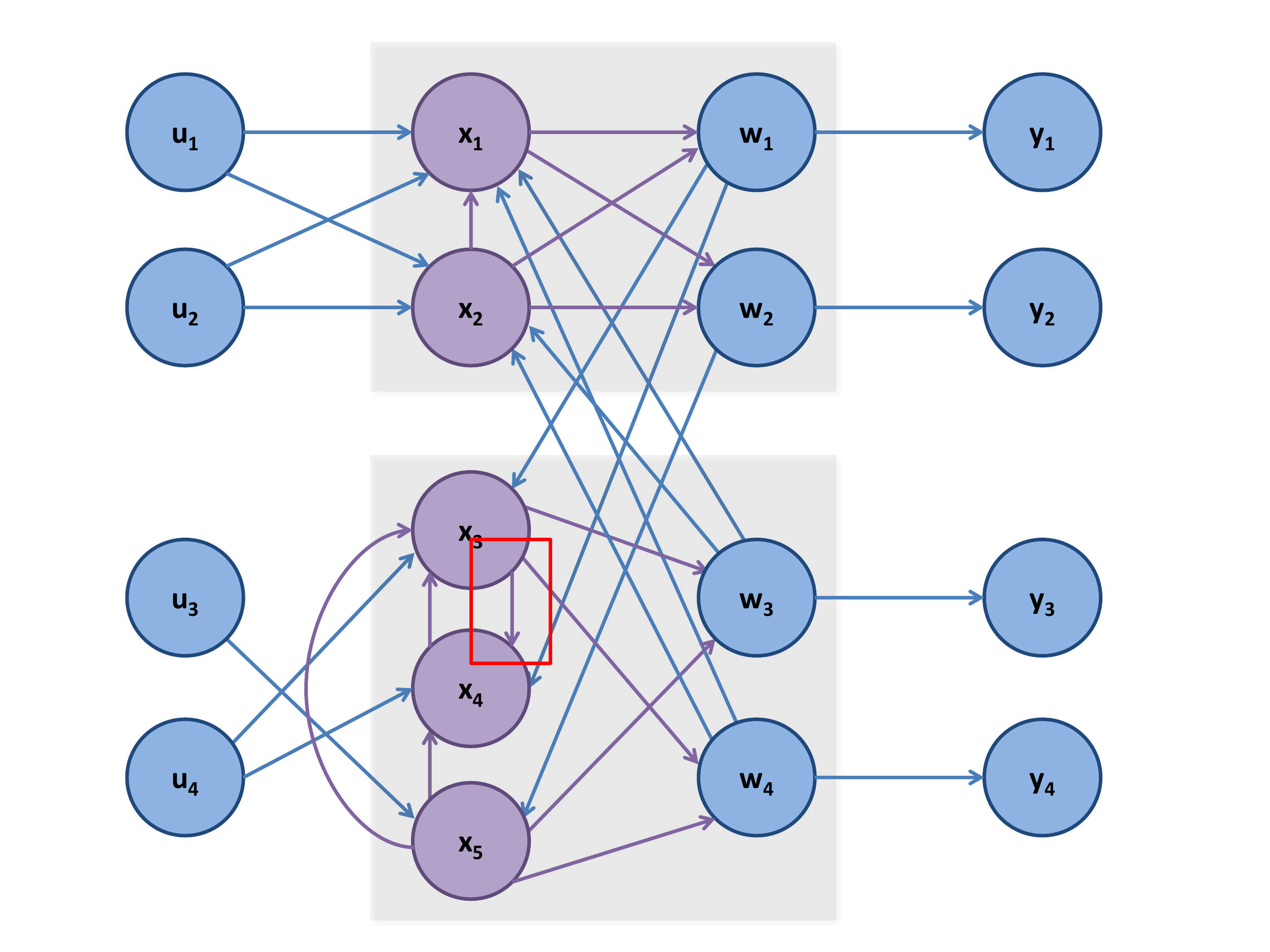}
                \caption{Remove an edge from the complete computational structure.}
                \label{fig:remedge}
        \end{subfigure}
        \begin{subfigure}[b]{0.45\textwidth}
                \includegraphics[page=3,width=\textwidth]{dsfsubexample1}
                \caption{The subsystem structure remains the same since, with two subsystems in feedback.}
                \label{fig:subsame}
        \end{subfigure}
        \begin{subfigure}[b]{0.45\textwidth}
                \includegraphics[page=4,width=\textwidth]{dsfsubexample1}
                \caption{The zero intricacy complete computational structure with the removed edge.}
                \label{fig:ziremedges}
        \end{subfigure}
        \begin{subfigure}[b]{0.45\textwidth}
                \includegraphics[page=5,width=\textwidth]{dsfsubexample1}
                \caption{The transformed zero intricacy complete computational structure with an extra edge.}
                \label{fig:zitransmoreedges}
        \end{subfigure}
        \begin{subfigure}[b]{0.55\textwidth}
                \includegraphics[page=6,width=\textwidth]{dsfsubexample1}
                \caption{The new signal structure with an extra edge.}
                \label{fig:dsfdiff}
        \end{subfigure}
        \caption{Subsystem Structure consistent with Two Signal Structures}\label{fig:samestructure2}
\end{figure}
\end{example}

One of the properties of the signal structure of a system that distinguishes it from the subsystem structure is known as shared hidden states.

\begin{definition}
A shared hidden state is a state within a system that is not manifest, i.e. that is part of the hidden structure, that has either multiple pathways from it that lead towards a manifest structure or multiple pathways that come from manifest structure or both.    
\end{definition}

When a system contains a shared hidden state, the associated signal structure is agnostic to that state, meaning that it allows for hidden states to be shared across system edges. In contrast, the subsystem structure does not allow for hidden states to be shared across systems. 

Therefore, when shared hidden states exist in a system the signal structure contains more structural information than the subsystem structure as shown in Figure \ref{fig:sharedhiddenstate}. Moreover, since the signal structure is agnostic to shared hidden states, the process of determining a unique dynamical structure function from a system's transfer function has reasonable conditions, see Section \ref{sec:netrecon}, unlike the subsystem structure of the system.

\begin{figure}[h!]
        \centering
        \begin{subfigure}[b]{0.4\textwidth}
                \includegraphics[page=1,width=\textwidth]{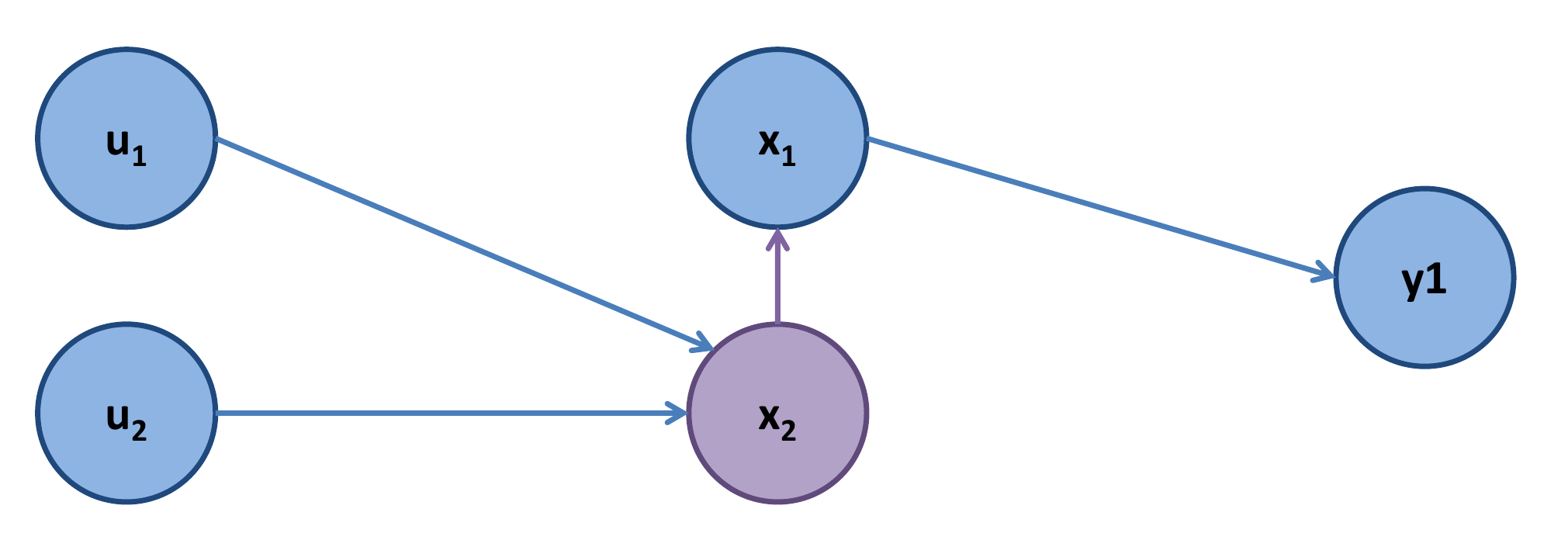}
                \caption{Complete computational structure with shared hidden node.}
                \label{fig:ccswshared}
        \end{subfigure}
        \begin{subfigure}[b]{0.4\textwidth}
                \includegraphics[page=3,width=\textwidth]{sharedhiddenstate}
                \caption{Signal Structure containing paths that represent two separate systems.}
                \label{fig:sigstrshared}
        \end{subfigure}
        \begin{subfigure}[b]{0.4\textwidth}
                \includegraphics[page=2,width=\textwidth]{sharedhiddenstate}
                \caption{Subsystem Structure containing only a single subsystem.}
                \label{fig:substrshared}
        \end{subfigure}
        \caption{Shared Hidden State}\label{fig:sharedhiddenstate}
\end{figure}





The dynamical structure function of a system is uniquely defined given a zero-intricacy state space model, as derived previously in this section. Determining a unique state space model given a system's dynamical structure function is an ill-posed problem, though a procedure for determining a minimal state space realization given a dynamical structure function $(Q,P)$ was given in \cite{minqp}.

\section{Applications of Dynamical Structure Functions}

The dynamical structure function is a versatile system representation and in this section we develop several features of the dynamical structure function including: network reconstruction, vulnerability analysis, and distributed controller design.

\subsection{Network Reconstruction}

\label{sec:netrecon}

Network reconstruction is the process of determining the structure of an unknown system, \cite{nessandsuff}. The network reconstruction process detailed here determines a unique dynamical structure function given a system's transfer function $H$. Given $H$ and noting the relationship in (\ref{eq:tfdsf1}) we can reorder the equation to get
\begin{equation}
\label{eq:start}
\left[\begin{array}{cc}I & H^{T}\end{array}\right] \left[\begin{array}{c}P^{T}\\Q^{T}\end{array}\right] = H^{T}
\end{equation}
\noindent where $A^T$ represents the transpose of $A$.
Noting that 
\[
 AX=B \iff {\rm blckdiag(A,...,A)}\overleftarrow{x}=\overleftarrow{b}
 \]
\noindent where $blckdiag(A,..., A) = \begin{bmatrix} A & 0 & 0 \\ 0 & \ddots & 0 \\ 0 & 0 & A \end{bmatrix}$ and $\overleftarrow{a}$ is a vector of the stacked columns of the matrix $A$ and defining $X=\left[\begin{array}{cc}P^{T}&Q^{T}\end{array}\right]$, Equation (\ref{eq:start}) can be rewritten as
\begin{equation}
\label{eq:next}
\left[\begin{array}{cc}I & {\rm blckdiag}(H^T,...,H^T)\end{array}\right]\overleftarrow{x} =\overleftarrow{h}.
\end{equation}

Since the diagonal elements of $Q$ are identically zero and the dimensions of $P$, $Q$, and $H$ are $p\times m$, $p\times p$, and $p\times m$ respectively, where $p$ is the number of outputs from the system and $m$ is the number of inputs, then exactly $p$ elements of $\overleftarrow{x}$ are always zero.  Abusing notation, $\overleftarrow{x}$ can be defined with these zero elements removed, reducing Equation (\ref{eq:next}) to
\begin{equation}
\label{eq:final}\left[\begin{array}{cc}I & {\rm blckdiag}(H_{-1}^T, H_{-2}^T,..., H_{-p}^T)\end{array}\right]\overleftarrow{x} =\overleftarrow{h}.
\end{equation}
\noindent where $H_{-i}^T$ is the matrix $H^T$ with the $i^{th}$ column removed.

Equation (\ref{eq:final}) reveals the mapping from elements of the dynamical structure function, contained in $\overleftarrow{x}$, to its associated transfer function, represented by $\overleftarrow{h}$. The mapping is a linear transformation represented by the matrix operator $L =\left[\begin{array}{cc}I & {\rm blckdiag}(H_{-1}^T, H_{-2}^T,..., H_{-p}^T)\end{array}\right]$.  This matrix has dimensions $(pm) \times (pm+p^2-p)$, and, thus, is not injective. This means the problem of network reconstruction from input-output dynamics is ill-posed and other information about the system is required a priori in order to determine a unique dynamical structure function.

Certain elements of the vector $\overleftarrow{x}$ need to be known a priori in order to reduce the transformation to an injective map. To accomplish this, consider the $(pm+p^2-p)\times k$ transformation $T$ such that 
\begin{equation}
\label{eq:T}
\overleftarrow{x}=Tz
\end{equation}
where $z$ is an arbitrary vector of size $k$.  

Letting $M=LT$, which makes $M$ a $pm \times k$ matrix, $M$ will be injective if and only if ${\rm rank}(M) = k$, i.e. $M$ has full column rank. Observing that $M$ is the mapping from unidentified model parameters to the system's transfer function we see that if $M$ is injective, one can clearly solve for $z$ given $H$ and then construct the dynamical structure function from $\overleftarrow{x}$. \textbf{This means that a $T$ that ensures the rank of $M$ is equal to $k$ is precisely the a priori system information that is necessary and sufficient for reconstruction of a unqiue dynamical structure function given a system's transfer function.}

\begin{example}
Given the following transfer function of a system
\[H = \begin{bmatrix} \frac{s+3}{s^2 + 3s +2} & -\frac{(s+3)}{s^3+6s^2+10s+5} \\ \frac{1}{s+1} & -\frac{(s^2+5s+6)}{s^3+6s^2+10s+5} \end{bmatrix}\]
we attempt to find the dynamical structure function $(Q,P)$ of the system
\[Q = \begin{bmatrix} 0 & Q_{12} \\ Q_{21} & 0 \end{bmatrix} \text{ and } P = \begin{bmatrix} P_{11} & P_{12} \\ P_{21} & P_{22} \end{bmatrix}\]
yielding the vector of unknowns $\vec{x} = \begin{bmatrix} P_{11} & P_{12} & P_{21} & P_{22} & Q_{12} & Q_{21}\end{bmatrix}'$. 
This gives us $L\vec{x} = \vec{b}$ as
\[ \begin{bmatrix} 1 & 0 & 0 & 0 & \frac{1}{s+1} & 0\\ 
		     0 & 1 & 0 & 0 & -\frac{(s^2+5s+6)}{s^3+6s^2+10s+5} & 0\\
		     0 & 0 & 1 & 0 & 0 & \frac{s+3}{s^2 + 3s +2}\\
		     0 & 0 & 0 & 1 & 0 & -\frac{(s+3)}{s^3+6s^2+10s+5} \end{bmatrix}\begin{bmatrix} P_{11} \\ P_{12} \\ P_{21} \\ P_{22} \\ Q_{12} \\ Q_{21}\end{bmatrix} = 
\begin{bmatrix} \frac{s+3}{s^2 + 3s +2} \\ -\frac{(s+3)}{s^3+6s^2+10s+5} \\ \frac{1}{s+1} \\ -\frac{(s^2+5s+6)}{s^3+6s^2+10s+5} \end{bmatrix}
\]

Without additional information a priori structural information, the system can not be reconstructed. Suppose, however, that a priori information details that $P$ has the form 
\[P = \begin{bmatrix} P_{11} & 0 \\ P_{21} & -P_{11} \end{bmatrix}.\]
Using this information the vector $\vec{x}$ can then be decomposed into the form $T\vec{z}$ where
\[T = \begin{bmatrix}1 & 0 & 0 & 0 \\ 0 & 0 & 0 & 0 \\ 0 & 1 & 0 & 0 \\ -1 & 0 & 0 & 0 \\ 0 & 0 & 1 & 0 \\ 0 & 0 & 0 & 1 \end{bmatrix} \text{ and } \vec{z} = \begin{bmatrix} P_{11} \\ P_{21} \\ Q_{12} \\ Q_{21} \end{bmatrix}\]

Replacing $\vec{x}$ with $T\vec{z}$ above yields $M\vec{z} = \vec{b}$, where $M=LT$, as
\[
\begin{bmatrix} 1 & 0 &  \frac{1}{s+1} & 0\\ 
		     0 & 0 & -\frac{(s^2+5s+6)}{s^3+6s^2+10s+5} & 0\\
		     0 & 1 & 0 & \frac{s+3}{s^2 + 3s +2}\\
		     -1 & 0 & 0 &  -\frac{(s+3)}{s^3+6s^2+10s+5} \end{bmatrix}\begin{bmatrix} P_{11} \\ P_{22} \\ Q_{12} \\ Q_{21}\end{bmatrix} 
= \begin{bmatrix} \frac{s+3}{s^2 + 3s +2} \\ -\frac{(s+3)}{s^3+6s^2+10s+5} \\ \frac{1}{s+1} \\ -\frac{(s^2+5s+6)}{s^3+6s^2+10s+5} \end{bmatrix}
\] In this case $M$ is full rank, which means that the system is reconstructible. By solving for $\vec{x} = M^{-1}\vec{b}$ we get the dynamical structure function
\begin{equation}Q = \begin{bmatrix} 0 & \frac{1}{s+2} \\ \frac{1}{s+3} & 0\end{bmatrix} \text{ and } P = \begin{bmatrix} \frac{1}{s+1} & 0 \\ \frac{1}{s+2} & -\frac{1}{s+1}\end{bmatrix}\label{eq:dsfnetreconex}\end{equation}
The signal structure corresponding to the dynamical structure function in (\ref{eq:dsfnetreconex}) is given in Figure \ref{fig:netrecssex}.

\begin{figure}[h!] \centering \includegraphics[width=.7\textwidth]{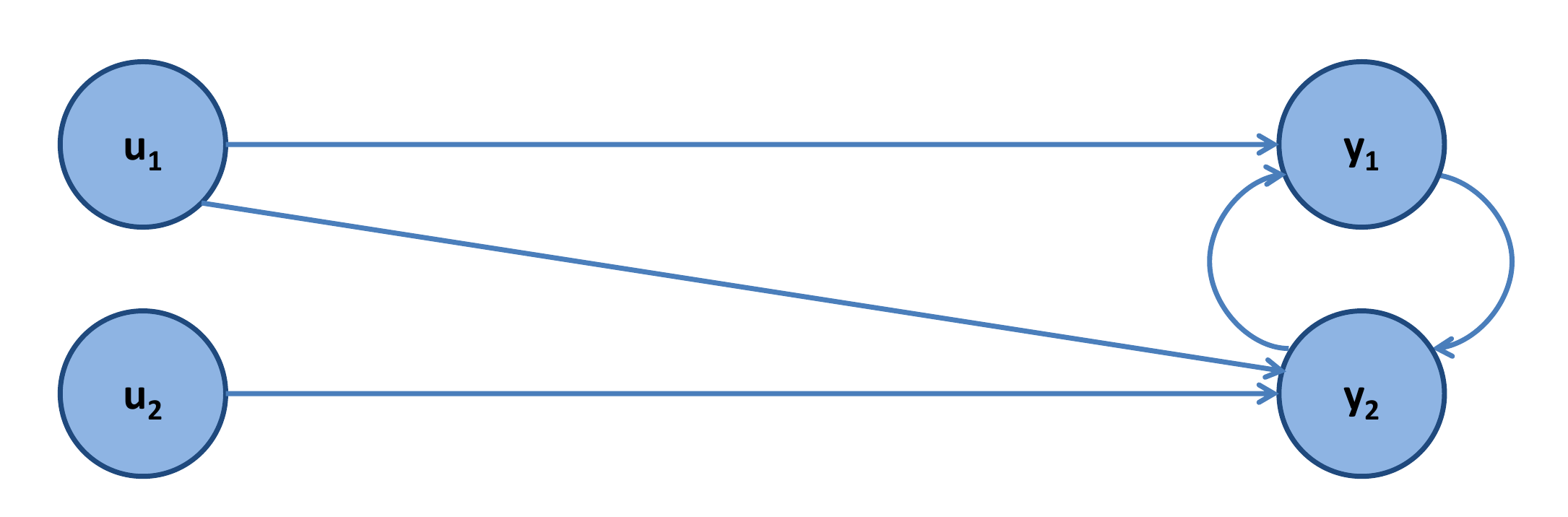} \caption{Reconstructed Signal Structure} \label{fig:netrecssex} \end{figure}

\end{example}

Note that robust reconstruction methods for reconstructing the dynamical structure function of a system in the face of noise and nonlinearities were first developed in \cite{robust}. A more efficient polynomial time algorithm for network reconstruction from noisy data was developed in \cite{robpoly}. Further improvements to the robust reconstruction process, including a more efficient algorithm that also allowed for the reconstruction of more systems, were detailed in \cite{polynomial}.

This network reconstruction process is useful because it allows us to determine a notion of structure from a representation with little structural information under very reasonable conditions. Once the structure of the system is determined, more analysis can be performed on the system, such as the vulnerability of the system to various classes of attacks.

\subsection{Vulnerability Analysis}
\label{sec:vulnerability}

Vulnerability analysis on a system attempts to determine the effects of external attacks or accidental component failures on the overall dynamics of the system. Conducting a vulnerability analysis of a system is important in the design stages of the system construction, because it allows for the system to be built to be robust to both internal and external perturbations. 

In order to conduct a vulnerability analysis on a system's dynamical structure function, the derivation of the dynamical structure function must be extended to include external disturbances. This can be done by first redefining the state space model (\ref{eq:ssequation}) to include an external disturbance term
\begin{equation} \begin{array}{rcl} \begin{bmatrix}\dot{y} \\ \dot{z}\end{bmatrix} & = & \begin{bmatrix} A_{11} & A_{12} \\ A_{21} & A_{22} \end{bmatrix}\begin{bmatrix}y \\ z\end{bmatrix} + \begin{bmatrix} B_1 \\ B_2 \end{bmatrix} u + \begin{bmatrix} F_1 \\ F_2 \end{bmatrix} \psi \\ y & = & \begin{bmatrix}I & 0\end{bmatrix} \begin{bmatrix}y \\ z\end{bmatrix} \end{array} \label{eq:ss} \end{equation}

Following a similar process to that found in \cite{dsfpaper} we find the corresponding dynamical structure function to be
\begin{equation} Y = QY + PU + \Delta\Psi \label{eq:newdsf} \end{equation} where \[Q = (sI-D)^{-1}(W-D)\] \[P = (sI-D)^{-1}V\] \[\Delta = (sI-D)^{-1}N\] with $W = A_{11} + A_{12}(sI-A_{22})^{-1}A_{21}$, $D = diag(W_{11}, ..., W_{pp})$, $V = B_{1} + A_{12}(sI-A_{22})^{-1}B_{2}$, and $N = F_{1} + A_{12}(sI-A_{22})^{-1}F_{2}$. Equation (\ref{eq:newdsf}) is then a generalized attack model in the dynamical structure function domain.

Focusing on a class of destabilizing attacks, assume that the system being analyzed is stable and only consider attack models that use the existing communication structure to conduct an attack. This is not an unreasonable assumption since creating new links within a system may be a difficult or expensive task for an attacker.

\subsubsection{Vulnerability of a Single Link Attack} \label{sec:vulnsinglelink}

Starting with (\ref{eq:newdsf}) and solving for $Y$ in terms of $U$ and $\Psi$ yields \begin{equation}Y = (I-Q)^{-1}PU+(I-Q)^{-1}\Delta\Psi \label{eq:tfdsf}\end{equation} \noindent where the input-output relationship is given by $H = (I-Q)^{-1}P$ and the transfer function describing how $\Psi$ affects the exposed states, $Y$, is $(I-Q)^{-1}\Delta$. Given that the system is stable, no bounded input can destabilize the system, so the case when $\Psi = Y$ is analyzed since it means that an attacker is using some combination of additive perturbations on existing communication links to destabilize the system. 

In \cite{anurag}, it states that a stable additive perturbation $\Delta$ on a link $Q_{ij}$ or $P_{ij}$ is able to destabilize the system if and only if the transfer function, $M_{ij}$, seen by $\Delta$ is nonzero. This means that the link $Q_{ij}$ or $P_{ij}$ is in feedback with some series of links in $Q$ or $P$. Note that $M_{ij}$ is the transfer function from $\Delta Y_j$ to $Y_{j}$. The transfer function seen by a perturbation $\Delta Y$ is then given in (\ref{eq:tfdsf}) as $(I-Q)^{-1}$. In particular, if we want to determine the vulnerability of a single link attack on a link $Q_{ij}$, we know this can be modeled as $\Delta_{ij} = (sI-D_{ii})^{-1}N_{ij}$ with the rest of the entries in $\Delta$ equal to zero. Then, the transfer function seen by the perturbation on the link $Q_{ij}$ is found from
\begin{equation}\begin{bmatrix}Y_1 \\ \vdots \\ Y_{j-1} \\ Y_{j} \\ Y_{j+1} \\ \vdots \\ Y_{p} \end{bmatrix} = K\begin{bmatrix} 0 \\ \vdots \\ 0 \\ \Delta_{ij}Y_j \\ 0 \\ \vdots \\ 0 \end{bmatrix}\label{eq:hij}\end{equation}
\noindent where $K = (I-Q)^{-1}$. From (\ref{eq:hij}), we can see that $Y_j = K_{ji}\Delta_{ij}Y_j$ since $\Delta_{ij}Y_j$ is in the $i^{th}$ row of the vector given in (\ref{eq:hij}). Therefore, the vulnerability of a single link can be defined as
\[v_{ij} = ||K_{ji}||_\infty\]
\noindent which means that the vulnerability of the entire system is
\begin{equation}V = \max_{Q_{ij} \neq 0 \in Q} ||K_{ji}||_\infty\end{equation}
which is simply the maximum possible vulnerability across all links.

\subsubsection{Vulnerability of a Multiple Link Distributed Attack}

Consider now an attack in which multiple attackers are simultaneously performing unique single link attacks in the system and are not sharing information, known as a distributed attack. This is modeled by the concatenation of several single link attacks on the system and by application of the small gain theorem, the vulnerability, $v_{ij,...,kl}$ of this type of an attack is the structured singular value, $\mu_{ij,...,kl}$, of the matrix
\begin{equation} R_{ij,...,kl} = 
\begin{bmatrix} K_{ji} & 0 & ... & 0 \\ 0 & \ddots & \ddots& \vdots \\ \vdots & \ddots &\ddots & 0 \\ 0 & ... & 0 & K_{lk} \end{bmatrix}\label{eq:R}\end{equation}
That is,
\[v_{ij,...,kl} = \mu(R_{ij,...,kl}, \Pi)\]

The overall vulnerability of the system to a distributed attack is
\[V = \max_{R_{links} \in \mathscr{R}} \mu_{links}\]
\noindent where $\mathscr{R}$ is the set of matrices of the form (\ref{eq:R}) over the set of all possible combinations of links, $\mathscr{L}$, and $\mu_{links}$ is the structured singular value of $R_{links}$.

\subsubsection{Vulnerability of a Multiple Link Co-ordinated Attack}

A multiple link co-ordinated attack is another generalizaation of a single link attack and is similar to a distributed attack, except that it models either communication between multiple attackers or a single attacker targeting multiple links. The transfer function seen by a perturbation on multiple links when allowing for communication in the attack is then given by
\begin{equation} 
T_{ij,...,kl} = \begin{bmatrix}K_{ij} & ... & K_{il} \\ \vdots & \ddots & \vdots \\ K_{kj} & ... & K_{kl} \end{bmatrix} \label{eq:M}
\end{equation}

In this case, the vulnerability of a multiple link co-ordinated attack is
\[v_{ij,...,kl} = ||T_{ij,...,kl}||_{\infty}\]
\noindent and the overall vulnerability of the system to a co-ordinated attack is
\[V = \max_{links \in \mathscr{L}} ||T_{links}||_\infty \]
 
\subsubsection{Reducing Vulnerability in Open-Loop Systems}
\label{sec:minimizingvulnopenloop}

Since the vulnerability of any given link in a system is the transfer function seen by a perturbation on that link, the vulnerability of a system to bounded perturbations is nonzero if and only if feedback exists within the system. Therefore, a completely secure architecture is one in which no links in $Q$ exist. 

Note that since $G=(I-Q)^{-1}P$, when $Q = 0$, then $P = G$. Since links in $P$ are never in feedback for open-loop systems in which attackers cannot create links, they are never vulnerable (see Example \ref{ex:vuln}). Thus, the overall vulnerability of a system with $Q = 0$ is $V = 0$, meaning there does not exist a finite additive perturbation on a link in the system that can destabilize the system under the assumption that the attacker can only use the existing communication network of the system \cite{anurag}.

\begin{example}
\label{ex:vuln}
Consider the following dynamical structure function
\begin{equation}Q = \begin{bmatrix}0 & \frac{1}{s+1} & 0 \\ 0 & 0 & \frac{1}{s+2} \\ \frac{1}{s+3} & 0 & 0\end{bmatrix}, \text{   } P = \begin{bmatrix} \frac{1}{s+1} & 0 & 0 \\ 0 & \frac{1}{s+1} & 0 \\ 0 & 0 & \frac{1}{s+1}\end{bmatrix}\label{eq:vulnqp}\end{equation}

The corresponding transfer function for (\ref{eq:vulnqp}) is 
\begin{equation} G = \frac{1}{s^3+6s^2+11s+5}\begin{bmatrix} s^2+5s+6 & \frac{s^2+5s+6}{s+1} & \frac{s+3}{s+1} \\ 1 & s^2+5s+6 & s+3 \\ s+2 & \frac{s+2}{s+1} & s^2+5s+6 \end{bmatrix} \end{equation} By the small gain theorem, the smallest perturbation that could destabilize the system is $\frac{1}{||G||_\infty} = 0.4152$, which makes the vulnerability of the input-output system $V = ||G||_{\infty} = 2.4085$.

The signal structure of the system in (\ref{eq:vulnqp}), pictured in Figure \ref{fig:vulnarc}, is a ring structure with a feedback loop in $Q$. As mentioned in Section \ref{sec:minimizingvulnopenloop}, this system structure is vulnerable to destabilizing attacks that target specific links. Given $H = (I-Q)^{-1}$, we can determine the vulnerabilities of each link $Q$:
\[\begin{array}{rcl} v_{12} & = & .2 \\ v_{23} & = & .4 \\ v_{31} & = & .6 \end{array}\]
The overall vulnerability of the system to single link attacks is $V = v_{31} = .6 < ||G||_{\infty}$, which means the smallest perturbation on a single link that can destabilize the system is $\frac{1}{V} = 1.67$. This perturbation is smaller than the perturbation required to destabilize the input-output representation since it restricts attacks to only perturb one link within the system, rather than perturbations that affect the entire system.

\begin{figure}[h!] \centering \includegraphics[page=1,width=.7\textwidth]{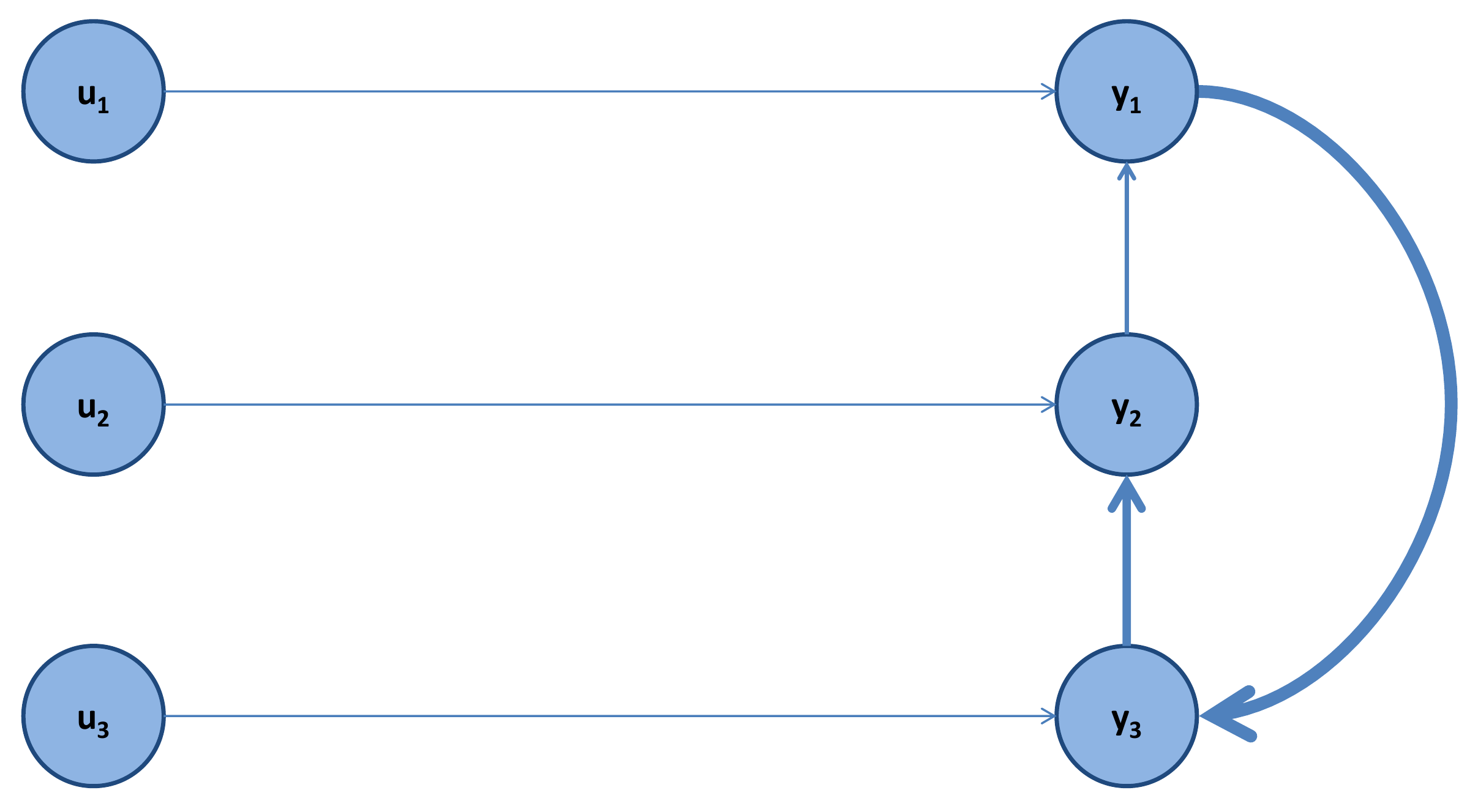} \caption{Vulnerable structure: the size of each link corresponds to its vulnerability. Remember that links in $P$ are not vulnerable.} \label{fig:vulnarc} \end{figure}

Also discussed Section \ref{sec:minimizingvulnopenloop} is that one possible secure structure for the system in (\ref{eq:vulnqp}) is one in which we remove all the links in $Q$, which removes all feedback from $Q$. This system would have a dynamical structure function of the form
\begin{equation} \bar{Q} = \begin{bmatrix} 0 & 0 & 0 \\ 0 & 0 & 0 \\ 0 & 0 & 0 \end{bmatrix}, \text{   } \bar{P} = \frac{1}{s^3+6s^2+11s+5}\begin{bmatrix} s^2+5s+6 & \frac{s^2+5s+6}{s+1} & \frac{s+3}{s+1} \\ 1 & s^2+5s+6 & s+3 \\ s+2 & \frac{s+2}{s+1} & s^2+5s+6 \end{bmatrix}\label{eq:secqp} \end{equation}

The signal structure of the system in (\ref{eq:secqp}), pictured in Figure \ref{fig:secarc}, has no feedback in $Q$ while still maintaining all the pathways from inputs to measured states/outputs that existed in the original system from (\ref{eq:vulnqp}), making it a secure structure without compromising the input-output dynamics of the system. Note that by \emph{secure} we mean that the vulnerability of the system to both single link and multiple link attacks is $0$, although the system may still be vulnerable to other types of attacks. 

\begin{figure}[h!] \centering \includegraphics[page=2,width=.7\textwidth]{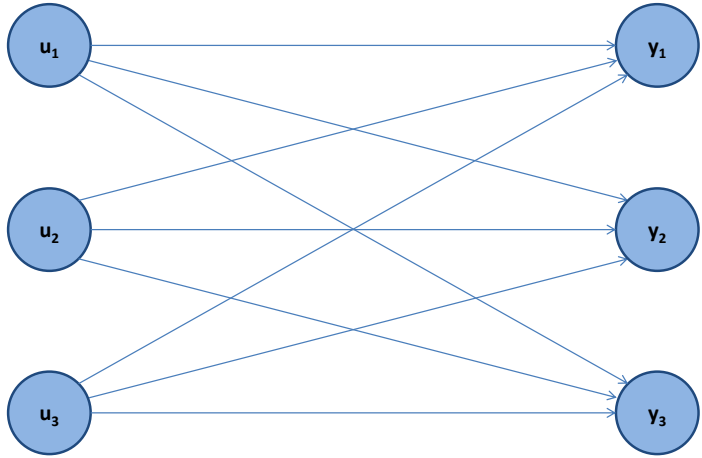} \caption{Secure structure: $P=G$ and $Q=0$, meaning that no links in the system are in feedback, so no links in the system are vulnerable to single link attacks that use the existing communication network.} \label{fig:secarc} \end{figure}

Unfortunately, in some systems removing all internal links is infeasible and, in some cases, feedback in a system is necessary. Consider an unstable plant in feedback with a stabilizing controller, the feedback is necessary to keep the system stable, but this produces vulnerable links within the closed loop system. The process of minimizing vulnerability in the face of feedback is still an open problem, although preliminary work in this area was conducted in \cite{nathansthesis}.

\end{example}

\subsection{Structured Controller Design}
\label{sec:distcontdes}

Another interesting problem involving stabilizing an unstable system is the problem of structured controller design, which refers to the design of a stabiizing controller where the structure of the controller is restricted by certain conditions. For example, a controller could be restricted to using existing communication links within a system or the controller must be designed to reduce the vulnerability of links to an attack. Denoting $(Q^{bin}, P^{bin})$ to be the boolean structure of allowable links in a controller and $H$ the transfer function of the unstable plant, the following procedure was developed in \cite{stabcont} for generating a stabilizing controller with the desired structure \\ \\
\textbf{Procedure }$\mathbb{P}$
\begin{enumerate}
\item Choose an undesigned link $p_{ij}$ such that $p_{ij}^{bin}=1$.
\item Design $p_{ij}$ to stabilize $h_{ji}$ such that there is no pole zero cancellation in $PG$. That is, the controller link is designed such that it stabilizes the transfer function it sees, and there is no pole-zero cancellation.
\item After adding $p_{ij}$, if the closed loop system $(H,P)$ is still unstable, repeat for all $p_{xy},$ where $p_{xy}^{bin}=1$, so that the added link attempts to stabilize the plant as well as all the previously added controller links.
\item If the closed loop system $S$, formed by adding $P$ in feedback with $H$, is still unstable, add links in $Q^{bin}$ such that there is no pole-zero cancellation between $Q$ and $S$. Again, each added link attempts to stabilize the plant $H$ along with the previously added links of $P$ and $Q$.
\end{enumerate}

Furthermore, it was shown in \cite{stabcont} that if this procedure does not create a stabilizing controller, then no such controller given the restrictions $(Q^{bin}, P^{bin})$ exists in the system. In particular, we note that this procedure works if the unstable modes of the plant in the system is both observable and controllable by the controller with the required structure. So, for example, a controller with a diagonal structure, i.e. $P^{bin} = \begin{bmatrix} 1 & 0 & ... & 0 \\ 0 & 1 & \ddots & \vdots \\ \vdots & \ddots & \ddots & 0 \\ 0 & ... & 0 & 1 \end{bmatrix}$ and $Q^{bin} = 0$, which represents a completely distributed controller, can only stabilize a system if the system's unstable modes that are controllable from input $i$ are also observable from output $i$. 

Another interesting controller structure is the cycle structure
\[P^{bin} = \begin{bmatrix} 1 & 0 & ... & 0 \\ 0 & 1 & \ddots & \vdots \\ \vdots & \ddots & \ddots & 0 \\ 0 & ... & 0 & 1 \end{bmatrix} \text{ and } Q^{bin} = \begin{bmatrix} 0 & 1 & 0 & ... & 0 \\ \vdots & \ddots &\ddots&\ddots & \vdots \\ \vdots & & \ddots & \ddots & 0 \\ 0 & ... & ... & 0 & 1 \\ 1 & 0 & ... & ... & 0 \end{bmatrix}\] which can stabilize any unstable plant as long as the plant is detectable and stabilizable, since the cyclic structure of the controller allows every unstable mode to be observable from any output and controllable from any input.

\begin{example}
\label{ex:distcont}
Given the following unstable system
\[\begin{array}{rcl}\dot{x} & = & \begin{bmatrix}1 & 0 & 0 \\ 1 & 2 & 3 \\ 1 & 0 & 3 \end{bmatrix}x + \begin{bmatrix} 1 & 0 \\ 0 & 1 \\ 0 & 0 \end{bmatrix}u \\ y & = & \begin{bmatrix} 1 & 0 & 0 \\ 0 & 1 & 0 \end{bmatrix}x \end{array}\]
with associated dynamical structure function
\[Q = \begin{bmatrix} 0 &  0 \\ \frac{s}{s^2-5s+6} & 0 \end{bmatrix}, \text{   } P = \begin{bmatrix}\frac{1}{s-1} & 0 \\ 0 & \frac{1}{s-2}\end{bmatrix}\]
with a signal structure as shown in Figure \ref{fig:unpl},
\begin{figure}[h!] \centering \includegraphics[page=1,width=.65\textwidth]{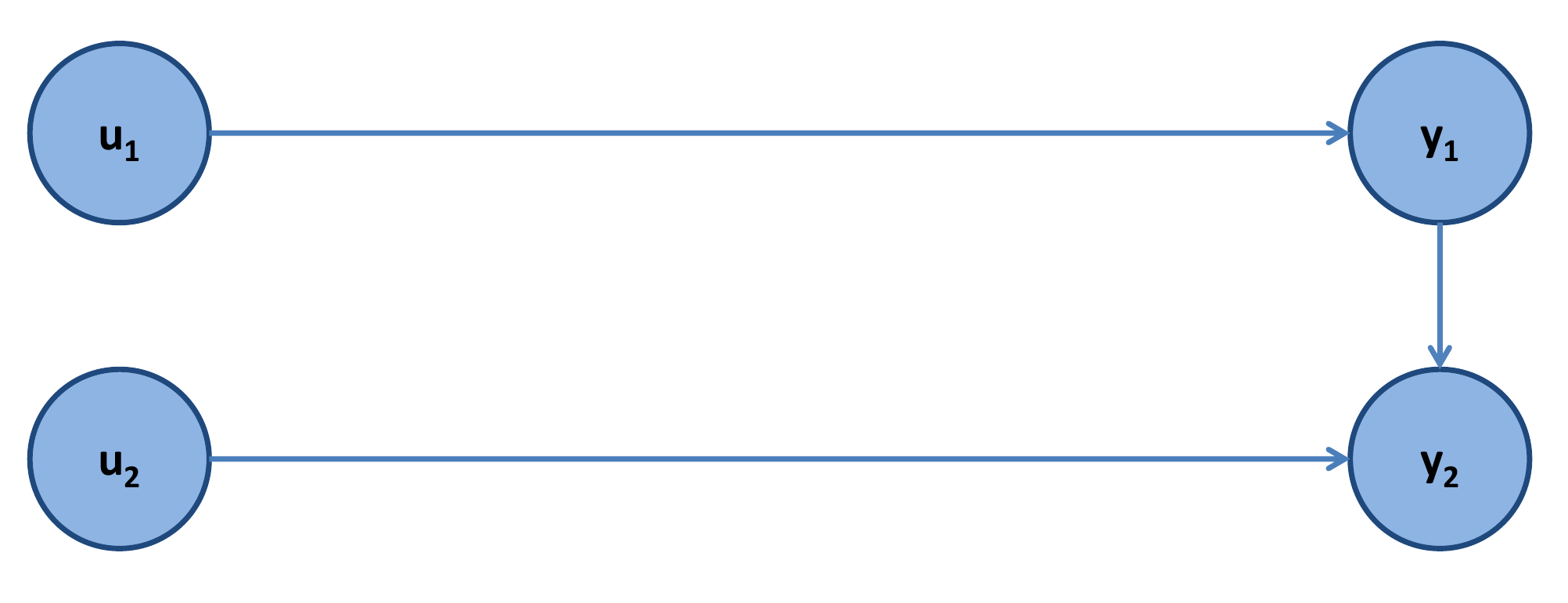} \caption{Unstable plant structure} \label{fig:unpl} \end{figure}
restrict the controller structure to be a diagonal controller, shown in Figure \ref{fig:diagpl}.
\begin{figure}[h!] \centering \includegraphics[page=2,width=.65\textwidth]{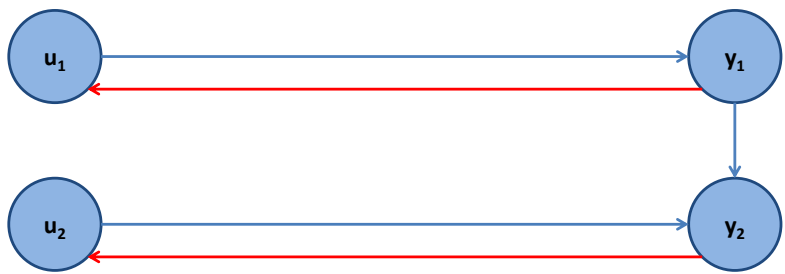} \caption{Unstable plant with diagonal controller structure in red} \label{fig:diagpl} \end{figure}

Noting that the system has modes $\{1,2,3\}$, the Popov-Belevitch-Hautus test for controllability and observability shows that mode $3$ is controllable only from input $u_1$, but observable only from output $y_2$, which means no stabilizing diagonal controller exists. However, since the system is both stabilizable and detectable, the system can be stabilized by a cyclic controller, shown in Figure \ref{fig:cyclpl}.

\begin{figure}[h!] \centering \includegraphics[page=3,width=.65\textwidth]{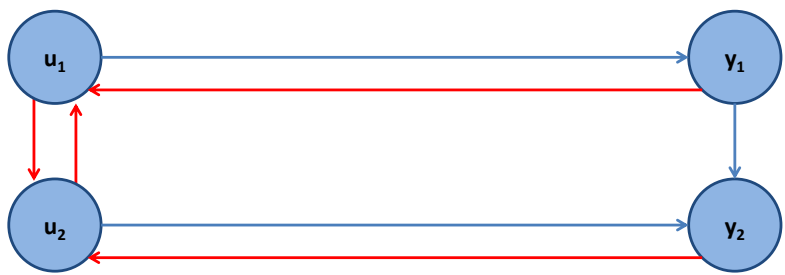} \caption{Unstable plant with stabilizing cyclic controller structure in red} \label{fig:cyclpl} \end{figure}

\end{example}

\section{Conclusion}

This chapter explored four different mathematical system representations and their associated structures.  Three of these representation are standard for LTI systems: transfer functions, state space models, and the interconnection of subsystems.  The fourth representation, the dynamical structure function, and its associated structure, the signal structure, are relatively new.

The dynamical structure function and its signal structure were then used to discuss three important problems: network reconstruction, vulnerability analysis, and the design of distributed stabilizing controllers.  These applications highlight the practicality of a theory of structures for networks of dynamic systems.  

\bibliographystyle{unsrt}
\bibliography{paper_refs}
\end{document}